\theoremstyle{plain}
\newtheorem{theorem}{Theorem}[section]
\newtheorem{proposition}[theorem]{Proposition}
\newtheorem{lemma}[theorem]{Lemma}
\newtheorem*{remark*}{Remark}
\theoremstyle{remark}
\def\tr{\text{tr}\,} 
\def\trf{\text{tr}_{\cF_+}}
\def\eps{\varepsilon}
\def\1{{\ensuremath {\mathds 1} }}
\def\bC{\mathbb{C}}
\def\bN{\mathbb{N}}
\def\bR{\mathbb{R}}
\def\bT{\mathbb{T}}
\def\bZ{\mathbb{Z}}
\def\cD{\mathcal{D}}
\def\cE{\mathcal{E}}
\def\cF{\mathcal{F}}
\def\cG{\mathcal{G}}
\def\cH{\mathcal{H}}
\def\cK{\mathcal{K}}
\def\cL{\mathcal{L}}
\def\cM{\mathcal{M}}
\def\cN{\mathcal{N}}
\def\cP{\mathcal{P}}
\def\cQ{\mathcal{Q}}
\def\cU{\mathcal{U}}
\def\cV{\mathcal{V}}
\def\tbf{\textbf }
\def\ii{\mathrm{i}}
\def\be{\begin{equation}}
\def\ee{\end{equation}}
\def\eps{\varepsilon}
\newcommand{\ph}{\varphi}
\newcommand{\wh}{\widehat}
\title{Second Order Expansion of Gibbs State Reduced {Density Matrices} in the Gross-Pitaevskii Regime}  
\author{Christian Brennecke\thanks{Institute for Applied Mathematics, University of Bonn, Endenicher Allee 60, 53115 Bonn, Germany.}\footnotemark[1] 
\and Jinyeop Lee\thanks{Department of Mathematics, LMU Munich, Theresienstrasse 39, 80333 Munich, Germany. \hspace{20 mm} Emails: brennecke@iam.uni-bonn.de, lee@math.lmu.de, nam@math.lmu.de}
\and Phan Th\`anh Nam\footnotemark[2] 
}
\begin{document}

\maketitle

\begin{abstract}
We consider a translation-invariant system of $N$ bosons in $\bT^{3}$ that interact through a repulsive two-body potential with scattering length of order $N^{-1}$ in the limit $N\to \infty$. We derive second order expressions for the one- and two-particle reduced density matrix matrices of the Gibbs state at fixed positive temperatures, thus obtaining a justification of Bogoliubov's prediction on the fluctuations around the condensate. 
\end{abstract}

\section{Introduction}

We consider a system of $N$ bosons in the torus $\Lambda = \bT^{3} = \bR^3 / \bZ^3$ in the Gross-Pitaevskii (GP) regime. The system is described by Hamiltonian
\begin{equation}\label{eq:defHN} 
H_N = \sum_{i=1}^N (-\Delta_{x_i}) +  \sum_{1\leq i<j\leq N} N^2 V (N (x_i -x_j)), 
\end{equation}
on the bosonic space $L^2_s (\Lambda^N) = \bigotimes_{\text{sym}}^N L^2(\Lambda)$, a subspace of $L^2 (\Lambda^N)$ consisting of functions that are symmetric w.r.t. permutations for each pair of variables. The interaction potential $V\in L^3 (\bR^3)$ is assumed to be pointwise non-negative, radial and compactly supported. Under this condition, $H_N$ is a self-adjoint operator with the 
same domain $\bigotimes_{\text{sym}}^N H^2(\Lambda)$ as that of the non-interacting Hamiltonian, by Kato's Theorem (see e.g.\ \cite{RS2}). 

The scaling $ N^2V(N \, \cdot\,)$ of the interaction in \eqref{eq:defHN} implies that its scattering length is of order $N^{-1}$, namely
$$ \mathfrak{a}_N \equiv \mathfrak{a}(N^2V(N\,\cdot\,))=N^{-1}\mathfrak{a} $$
where 
\begin{equation}\label{eq:a-def} \mathfrak{a}=\mathfrak{a}(V) = \frac1{8\pi} \inf \Big\{ \int_{\bR^3} 2|\nabla f|^2+V|f|^2: \lim_{|x|\to\infty}f(x)=1 \Big\}.
\end{equation}
Under our assumptions on $V$, the infimum in \eqref{eq:a-def} is attained for a unique minimizer $0\leq f\leq 1$ that solves the zero energy scattering equation 
        \begin{align} \label{eq:scattering-full-space} (-2\Delta + V)f = 0. \end{align}

In the Gross-Pitaevskii regime, the scattering length of the interaction $\mathfrak{a}_N \sim N^{-1}$ is much smaller than the mean distance $O(N^{-1/3})$ between particles, but the interaction strength $\|N^2V(N \, \cdot\,)\|_{L^\infty} \sim N^2$  is sufficiently large to have a leading order effect to the spectral property of the system when $N\to \infty$. In particular, since the system is very dilute, any particle is essentially felt by the others as a hard sphere of radius $\mathfrak{a}_N$, and hence it is reasonable to apply the mean-field approximation with the formal replacement $N^2V(N \, \cdot\,)\sim 8\pi  \mathfrak{a}_N \delta (\cdot)$. However, justifying this  heuristic idea is difficult since subtle mathematical arguments have to be developed to understand strong correlations at short distances. This is among the reasons why the Gross-Pitaevskii regime has attracted significant research attention for more than three decades. We refer to the books \cite{LSSY,BPS} and the survey \cite{Rou} for  pedagogical introductions to this scaling limit. 

To the leading order, the ground state energy of $H_N$ was derived in \cite{LY,LSY} and the Bose--Einstein condensation (BEC) was proved in \cite{LS1,LS2}. In the later context, the one-body density matrix of the ground state $\Psi_N$ has the largest eigenvalue $1+ o(1)_{N\to \infty}$ with the corresponding approximate eigenfunction given by the zero momentum condensate wave function $\varphi_0\equiv 1\in L^2(\Lambda)$, namely
\begin{equation} \label{eq:leading-1pdm}
\gamma_N^{(1)}= {\rm tr}_{2\to N} |\Psi_N\rangle \langle \Psi_N| \to |\varphi_0\rangle \langle \varphi_0|
\end{equation}
in trace class. We also refer to \cite{NRS,ABS,H,NRT,BocSei} for alternative approaches to \eqref{eq:leading-1pdm}, \cite{BBCS1, BBCS4,NNRT,BSS1} for optimal error estimates, \cite{DSY, DS,NR,BDS} for related results at positive temperatures, and \cite{ESY1, ESY2, ESY3,P,BDS,BS,COS} for the dynamical analogue.

Recently, the second order of the low-lying eigenvalues of $H_N$ was derived in \cite{BBCS3}, which justified Bogoliubov's approximation \cite{Bog} for the excitation spectrum. In this direction, we also refer to \cite{HST} for a simplified proof,  \cite{BCaS,NT,BSS2,BCOPS,Brooks} for extensions to other trapped systems and beyond the Gross-Pitaevskii regime, and \cite{YY,FS1,FS2,BCS,HHNST} for related results in the thermodynamic limit where the Lee--Huang--Yang formula was rigorously investigated.

Note that the spectral analysis in \cite{BBCS3} also gives information on low-lying eigenstates of $H_N$. To be precise, as a consequence of the proof in \cite{BBCS3}, one obtains a norm approximation of the form (see \cite[Eq. (6.7)]{BBCS3})
        \be\label{eq:gsappr} \| \psi_N - \cU \varphi_0^{\otimes N} \|^2\leq C N^{-1/4}, \ee
where $\psi_N\in L^2_s(\Lambda^N)$ denotes the (unique and positive) ground state of $H_N$ and $ \cU$ denotes an explicit unitary map on $L^2_s(\Lambda^N)$ that takes into account relevant pair correlations among the particles (for more details, see Section \ref{sec:recap}, where we recall basic definitions and results from \cite{BBCS3} that are relevant for our purposes). Using \eqref{eq:gsappr}, it was proved moreover that 
        \be\label{eq:condepl} N - N \langle \ph_0, \gamma_N^{(1)}\ph_0\rangle = \sum_{p\in \Lambda^*_+} \mu_p^2 + O\big(N^{-1/8}\big),\ee
where we set from now on $\Lambda^*_+ =(2\pi \bZ)^3\setminus\{0\}$ and where
 \[\mu_p^2 = \frac{|p|^2+8\pi \mathfrak{a} - \sqrt{|p|^4+16\pi \mathfrak{a}|p|^2}}{2\sqrt{|p|^4+16\pi \mathfrak{a}|p|^2}}\geq 0 .\]
The formula \eqref{eq:condepl} on the condensate depletion is basic prediction that follows from Bogoliubov's method \cite{Bog}. In other words, defining the orthogonal projections $P_0 =|\ph_0\rangle\langle\ph_0| $ and  $Q_0 =1-P_0$, we have 
 \begin{align} \label{eq:CV-Qgamma1Q-trace}
 \lim_{N\to\infty} \tr \, \Big| N Q_0 \gamma_N^{(1)} Q_0-  \sum_{p\in \Lambda_+^*} \mu_p^2 |\ph_p\rangle\langle\ph_p|\Big|=0
 \end{align}
 where  $x\mapsto \ph_p(x) = e^{\ii px}\in L^2(\Lambda)$ denotes the plane wave of momentum $p$. The trace convergence \eqref{eq:CV-Qgamma1Q-trace} follows from the convergence of the traces \eqref{eq:condepl}, the weak-$*$ convergence of $N Q_0 \gamma_N^{(1)} Q_0$ (which can be proved like \eqref{eq:condepl}), the positivity $N Q_0 \gamma_N^{(1)} Q_0 \geq 0 $ and a general abstract argument (see e.g. \cite[Theorem 2.20]{Simon}). Together with $\tr\gamma_N^{(1)}=1$ and $ P_0 \gamma_N^{(1)} Q_0 = Q_0 \gamma_N^{(1)} P_0=0$ by translation invariance of $ \gamma_N^{(1)}$, one thus obtains the following refinement of \eqref{eq:leading-1pdm}:
 \begin{align} \label{eq:CV-1pdm-GS}  
 N\gamma_N^{(1)} = \bigg(N -\sum_{p\in\Lambda_+^*} \mu_p^2\bigg) |\ph_0\rangle\langle\ph_0| +\sum_{p\in\Lambda_+^*}\mu_p^2|\ph_p\rangle\langle\ph_p|  + o(1)_{N\to \infty} 
  \end{align}
where the error $o(1)$ is small in trace class.  The approximation \eqref{eq:CV-1pdm-GS} was also known in the mean-field regime \cite{NNTwoTerm} (see also \cite{BPS,BLPR} for higher order expansions). 

In this note, our goal is to derive analogous second order approximations for the reduced one- and two-particle density matricies in the setting of positive temperature $T>0$ (of order 1). In this setting, expectation values of observables are determined by the Gibbs state 
 \begin{align} \label{eq:def-Gibbs} \rho_{N,\beta} \equiv \rho_N = \frac1{Z_N} e^{-\beta H_N} \hspace{0.5cm} \text{ for } \hspace{0.5cm} Z_{N,\beta} \equiv Z_N = \tr e^{-\beta H_N}, \end{align}
where $\beta= 1/T$ denotes the inverse temperature. The reduced one- and two-particle density matricies $ \rho_N^{(1)}$ and $ \rho_N^{(2)}$ are defined as the partial traces  
\begin{equation}\label{eq:def-reduced-denstity-matricies}
    \rho_N^{(1)} = \tr_{2,\ldots,N}\,( \rho_N ), \quad \rho_N^{(2)} = \tr_{3,\ldots,N}\,(\rho_N).
\end{equation}

Here is our main result. 

\begin{theorem}\label{thm:main}
Let $0\le V\in L^3 (\bR^3)$ be radial and compactly supported. Consider the Gibbs state in \eqref{eq:def-Gibbs} with $H_N$ given in \eqref{eq:defHN} and $\beta >0$ fixed. For $p\in \Lambda_+^*$, define 
\be
\label{eq:defkp} 
   \epsilon_p  = \sqrt{|p|^4+16\pi\mathfrak{a}|p|^2},   \qquad \mu_p^2  = \frac{|p|^2+8\pi \mathfrak{a} - \epsilon_p}{2\epsilon_p}, \qquad  \theta_p^2 = \frac{|p|^2 +8\pi \mathfrak{a} }{ e^{\beta \epsilon_p}-1}.
\ee 
Then in the limit $N\to \infty$ we have 
        \be \label{eq:1pd}
        \emph{tr}\,\Big| N\rho_N^{(1)} - \Big(N- \sum_{p\in\Lambda_+^*} \big(\mu_p^2 + \theta_p^2\big)\Big)|\ph_0\rangle\langle\ph_0|- \sum_{p\in\Lambda_+^*} \big(\mu_p^2 + \theta_p^2\big)|\ph_p\rangle\langle\ph_p| \Big| \to 0
        \ee
and  
       \begin{align} \label{eq:2pd}
       \emph{\tr} \bigg|  &N\rho_N^{(2)}
         - \Big( N - 4\sum_{p\in\Lambda_+^*}\big(\mu_p^2+\theta_p^2\big)  \Big)|\ph_0\otimes\ph_0\rangle\langle\ph_0\otimes\ph_0|  - 4\sum_{p\in\Lambda_+^*} \big(\mu_p^2+\theta_p^2\big) |\ph_0\otimes\ph_p\rangle\langle\ph_0\otimes\ph_p| \nonumber\\
        &+  4\pi\mathfrak{a}\sum_{p\in\Lambda_+^*}   \bigg(\frac{1}{\epsilon_p} + \frac{2}{e^{\beta\epsilon_p}-1}\bigg) \Big(|\ph_0\otimes\ph_0\rangle\langle\ph_p\otimes\ph_{-p}| +|\ph_{-p}\otimes\ph_p\rangle\langle\ph_0\otimes\ph_{0}|\Big)\bigg| \to 0.
\end{align}
\end{theorem}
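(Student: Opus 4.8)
The plan is to transport the problem to the excitation Fock space via the unitary $\cU$ from \cite{BBCS3}. Writing $\cU H_N \cU^*$ in the usual way, one obtains, up to controllable errors, a Bogoliubov-type Hamiltonian $\mathbb{H}_{\mathrm{Bog}} = \sum_{p\in\Lambda_+^*}\big( F_p\, a_p^* a_p + \tfrac12 G_p(a_p^* a_{-p}^* + a_p a_{-p})\big)$ with coefficients $F_p \to |p|^2 + 8\pi\mathfrak{a}$ and $G_p \to 8\pi\mathfrak{a}$ as $N\to\infty$; diagonalizing by a second Bogoliubov transformation $\cT$ produces the dispersion $\epsilon_p$ of \eqref{eq:defkp}, and the ground state of $\mathbb{H}_{\mathrm{Bog}}$ is the quasi-free vacuum that accounts for the $\mu_p^2$ terms, exactly as in the zero-temperature formula \eqref{eq:CV-1pdm-GS}. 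The thermal contribution $\theta_p^2 = (|p|^2+8\pi\mathfrak{a})/(e^{\beta\epsilon_p}-1)$ must come from the Gibbs weight at temperature $T$: after conjugation, $\cT^*\cU\rho_N\cU^*\cT$ should be close (in an appropriate sense, e.g.\ relative entropy or a suitable operator norm on low-energy sectors) to the quasi-free Gibbs state of $\sum_p \epsilon_p b_p^* b_p$, whose two-point functions are $\langle b_p^* b_p\rangle = (e^{\beta\epsilon_p}-1)^{-1}$ and $\langle b_p b_{-p}\rangle = 0$. Undoing the Bogoliubov rotation $\cT$ then gives $\langle a_p^* a_p\rangle = \mu_p^2 + (u_p^2+v_p^2)(e^{\beta\epsilon_p}-1)^{-1}$ and $\langle a_p a_{-p}\rangle$ of the form $-u_pv_p\big(1 + 2(e^{\beta\epsilon_p}-1)^{-1}\big)$; a direct computation with $u_p^2+v_p^2 = (|p|^2+8\pi\mathfrak{a})/\epsilon_p$ and $2u_pv_p = 8\pi\mathfrak{a}/\epsilon_p$ reproduces precisely $\mu_p^2+\theta_p^2$ and the off-diagonal coefficient $4\pi\mathfrak{a}(\epsilon_p^{-1} + 2(e^{\beta\epsilon_p}-1)^{-1})$ in \eqref{eq:2pd}.

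The second half is bookkeeping: one expresses the kernels of $\rho_N^{(1)}$ and $\rho_N^{(2)}$ in terms of expectations of products of at most four creation/annihilation operators in the Gibbs state, pushes these through $\cU$ (which changes $a_0, a_0^*$ into essentially $\sqrt{N-\mathcal{N}_+}$ and the other modes into dressed operators), and keeps only the terms surviving in the limit. For $\rho_N^{(1)}$ the relevant quantities are $\langle a_p^*a_p\rangle$ for $p\neq 0$ and the normalization $\langle a_0^*a_0\rangle = N - \sum_{p\neq 0}\langle a_p^* a_p\rangle$; for $\rho_N^{(2)}$ one additionally needs the pairing terms $\langle a_0^* a_0^* a_p a_{-p}\rangle$, which at leading order factorize as $N\langle a_p a_{-p}\rangle$ by (approximate) quasi-freeness, yielding the $|\ph_0\otimes\ph_0\rangle\langle\ph_p\otimes\ph_{-p}|$ structure, together with the factor-$4$ combinatorial terms coming from the two ways of placing the condensate particle among the two arguments. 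The positivity $NQ_0\rho_N^{(1)}Q_0 \geq 0$ plus convergence of traces plus weak-$*$ convergence upgrades everything to trace-norm convergence via the Grümm-type argument already invoked after \eqref{eq:CV-Qgamma1Q-trace}, and the same is done for the relevant blocks of $\rho_N^{(2)}$.

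The main obstacle is controlling $\cT^*\cU\rho_N\cU^*\cT$ against the free quasi-free Gibbs state uniformly enough to extract \emph{both} the diagonal occupation numbers and the anomalous pairings with vanishing error after multiplication by $N$. At zero temperature this is packaged in the norm bound \eqref{eq:gsappr}; at positive temperature there is no single low-energy state, so one needs a quantitative comparison of Gibbs states — plausibly via a Gibbs variational principle / relative entropy estimate showing that $\rho_N$ nearly minimizes a free energy functional whose minimizer is the quasi-free state, combined with a priori bounds on $\langle \mathcal{N}_+^k \rangle_{\rho_N}$ and on $\langle \mathcal{N}_+ \mathcal{K}\rangle_{\rho_N}$ (number times kinetic energy) to justify discarding the cubic and quartic remainder terms in $\cU H_N\cU^*$. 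Propagating such bounds through the two Bogoliubov transformations, and in particular showing that the error terms are $o(1)$ \emph{after} the overall factor $N$, is where the real work lies; the algebraic identities matching the limit to \eqref{eq:defkp} are then routine.
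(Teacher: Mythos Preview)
Your overall architecture is right and matches the paper: conjugate by the excitation map and the chain of Bogoliubov/cubic renormalizations from \cite{BBCS3}, compute the quasi-free two-point functions in the diagonalized model, and use the positivity-plus-trace argument to upgrade to trace norm. The algebraic identities you write down for $\langle a_p^*a_p\rangle$ and $\langle a_pa_{-p}\rangle$ are exactly what the paper obtains.

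The gap is at the point you yourself flag as the ``main obstacle'': you speculate that the comparison of $\cT^*\cU\rho_N\cU^*\cT$ with the quasi-free Gibbs state should go through a relative-entropy or Gibbs-variational argument. The paper does \emph{not} do this, and it is not clear such an argument would close with $o(1)$ errors after multiplying by $N$. Instead the paper introduces a fixed spectral cutoff $\zeta>0$ and splits every trace as $\tr(\,\cdot\,)\mathbf{1}_{[0,\zeta]}(\cM_N')+\tr(\,\cdot\,)\mathbf{1}_{(\zeta,\infty)}(\cG_N')$. The high-energy tail is bounded by $Ce^{-c\beta\zeta}$ using only the coercivity $\cG_N'\ge c\cK-C$ and the min-max comparison $\lambda_j(\cG_N')\ge c\lambda_j(\cK)-C$; no state comparison is needed there. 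On the low-energy window one uses two ingredients specific to the renormalized Hamiltonian $\cM_N'$: a two-sided min-max bound $|\lambda_j(\cM_N')-\lambda_j(\cD)|\le C_\zeta N^{-1/4}$ and a Hilbert--Schmidt bound $\|\mathbf{1}_{[0,\zeta]}(\cM_N')-\mathbf{1}_{[0,\zeta]}(\cD)\|_{\mathrm{HS}}^2\le C_\zeta N^{-1/4}$ on the spectral projections. These let you replace $e^{-\beta\cM_N'}$ by $e^{-\beta\cD}$ inside each trace with an error that vanishes as $N\to\infty$ for fixed $\zeta$; only afterwards does one send $\zeta\to\infty$. In effect the paper compares \emph{eigenvalues and eigenprojections} on a finite-dimensional window rather than comparing the full Gibbs states, and this is what makes the $o(1)$ control tractable. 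Your proposal is missing precisely this spectral-cutoff mechanism and the projection comparison (the paper's Proposition~\ref{prop:MN}), which is the technical core of the argument.
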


Our derivation of Theorem \ref{thm:main} is based on extensions of the arguments in \cite{BBCS3}, allowing to handle eigenvalues and eigenfunctions collectively instead of individually. Besides the information on the one-particle density matrix, we are interested in the two-particle density matrix which is more complicated but more relevant to  correlations in interacting systems. All of this requires a careful revision of the existing techniques.  We will recall some preliminary results in Section \ref{sec:recap}, and then prove \eqref{eq:1pd} and \eqref{eq:2pd} in Sections \ref{sec:1pdm} and \ref{sec:2pdm}, respectively. 

The assumption $V\in L^3(\bR^3)$ can be relaxed and is a consequence of the fact that we use several results from \cite{BBCS3, BBCS4}, for simplicity's sake. In \cite{BBCS3, BBCS4}, the assumption $V\in L^3(\bR^3)$ is used to obtain certain information on the Neumann ground state of $ -\Delta +\frac12 V$ in $ B_\ell(0)\subset \bR^3$, for a small parameter $\ell >0$. Proceeding instead similarly as in \cite{NT, HST} and multiplying $1-f$, where $f$ denotes the solution of the zero energy scattering equation $(-\Delta +\frac12V)f=0 $ with $f(x)\to 1$ as $|x|\to \infty$, by a smooth bump function $\chi \in C^\infty_c(B_\ell(0))$, we expect that our results can be generalized to (radial, compactly supported and non-negative) potentials $V\in L^1(\bR^3)$ without substantial difficulties.  
   
The low temperature  $T\sim 1$ that we are considering is the natural regime to justify Bogoliubov’s prediction on the excitation spectrum. In this case the Gibbs state exhibits the complete Bose-Einstein condensation to the leading order and the temperature effects shows up in the second order terms of the reduced density matrices.  In the higher temperature $T\sim N^{2/3}$, the  Bose-Einstein condensation only holds in a partial sense and the leading order behavior of the one-body density matrix was derived by  Deuchert and Seiringer in \cite{DS} (a similar result in a harmonic trap was proved earlier by Deuchert, Seiringer and Yngvason in \cite{DSY}). It is a very interesting open question to understand the second order correction in this case, which is conceptually related to the understanding of the shift of the critical temperature of the Bose-Einstein condensation under the presence of the interaction.
 
\section{Preliminary Results}\label{sec:recap}
In this section, we introduce basic notation and recall several results from \cite{BBCS3,BBCS4}. 
\subsection{Fock Space Formalism} 
Recall that the bosonic Fock space is defined by  
        \[ \cF = \mathbb{C} \oplus \bigoplus_{n =1}^\infty L^2_s (\Lambda^{n}). \]
Particles can be created and annihilated through the usual bosonic creation and annihilation operators on $\cF$. For $g \in L^2 (\Lambda)$, the creation operator $a^* (g)$ and the annihilation operator $a(g)$ are  defined as  
        \[ \begin{split} 
        (a^* (g) \Psi)^{(n)} (x_1, \dots , x_n) &= \frac{1}{\sqrt{n}} \sum_{j=1}^n g (x_j) \Psi^{(n-1)} (x_1, \dots , x_{j-1}, x_{j+1} , \dots , x_n),
        \\
        (a (g) \Psi)^{(n)} (x_1, \dots , x_n) &= \sqrt{n+1} \int_\Lambda  \bar{g} (x) \Psi^{(n+1)} (x,x_1, \dots , x_n) \, dx .  \end{split} \]
They satisfy the canonical commutation relations (CCR), i.e. for $g,h \in L^2 (\Lambda)$, we have
\begin{equation*}
[a (g), a^* (h) ] = \langle g,h \rangle , \quad [ a(g), a(h)] = [a^* (g), a^* (h) ] = 0. \end{equation*}
The vacuum vector in $\cF$ is denoted by $\Omega = \{ 1, 0, \dots \} \in \cF$. In the translation-invariant setting, it is convenient to work in momentum space $\Lambda^* = (2\pi \bZ)^3$. We set
        \begin{equation*}
        a^*_p = a^* (\ph_p) \quad \text{and } \quad  a_p = a (\ph_p), \quad \ph_p  (x) = e^{\ii p  x} \in L^2 (\Lambda). 
        \end{equation*} 
        
In the sequel, we express basic observables in terms of the creation and annihilation operators, which is particularly useful to obtain basic operator and form bounds. First, the number operator $\cN$, defined by  $(\cN\Psi)^{(n)} = n \Psi^{(n)}$ for every $\Psi=(\Psi_n)_{n=0}^\infty \in \cF$, can be written as
        \[ \cN = \sum_{p \in \Lambda^*} a_p^* a_p . \]
Note that by Cauchy-Schwarz and the CCR  
        \begin{equation}\label{eq:abd} 
        \begin{split}
        \| a (f) \Psi \| & =  \bigg(\sum_{p,q \in\Lambda^*} {\widehat f}_q \overline{ \widehat f}_p \langle \Psi, a^*_q a_p\psi \rangle\bigg)^{1/2} \leq   \| f \| \| \cN^{1/2} \Psi \|, \\
        \| a^* (f) \Psi \| & = \| (a^* (f)a(f) + \|f\|^2)^{1/2} \Psi \| \leq \| f \| \| (\cN+1)^{1/2} \Psi \| 
        \end{split}
        \end{equation}
for all $f \in L^2 (\Lambda)$, and hence $a^*(f)$ and $a(f)$ are well-defined on the quadratic form domain of $\cN$. It is also useful to denote by  
$$ \cN_+ = \cN - a_0^*a_0^* = \sum_{p \in \Lambda^*_+} a_p^* a_p$$
the operator that counts the number of excitations (which are orthogonal to the condensate wave function $\ph_0\in L^2(\Lambda)$). The complete BEC \eqref{eq:leading-1pdm} can be written equivalently as 
        \[ \lim_{N\to\infty}  \tr \big| \gamma_N^{(1)} -|\ph_0\rangle\langle\ph_0| \big| = 0\; \Longleftrightarrow \;\lim_{N\to\infty} \langle \ph_0, \gamma_N^{(1)}\ph_0\rangle = 1 \;\Longleftrightarrow\; \lim_{N\to\infty } N^{-1} \langle \psi_N, \cN_+\psi_N\rangle = 0.  \]

Next, we can write the Hamiltonian $H_N$ in \eqref{eq:defHN} in second quantized form
        \begin{equation}\label{eq:HN-fock}
        H_N = \sum_{p \in \Lambda^*} p^2 a_p^* a_p + \frac{1}{2N}\sum_{r,p,q \in \Lambda^*} \wh{V}(r/N) a_{p+r}^* a_q^* a_p a_{q+r}.
        \end{equation}
        with the obvious embedding $L^2_s(\Lambda^N)\hookrightarrow \cF$ and the Fourier transform convention 
$$ \widehat V(p) = \int_{\bR^3} dx\;e^{-\ii p\cdot x}\,V(x), \quad p\in \bR^3.$$

Following Bogoliubov's ideas \cite{Bog}, the mode with $p=0$ is particularly important, because low energy states are expected to exhibit Bose-Einstein condensation into it. For this reason, it is useful to separate the condensate interactions from the interactions among excited particles (that is, particles with non-zero momentum). Our starting point is \cite[Eq. (3.4)]{BBCS3} describing a unitarily equivalent form of $H_N$ which reads
        \begin{equation}\label{eq:cLN}\cL_N = U_N H_N U_N^* = \cK  + \cL_N^{(0)} +\cL_N^{(2)}+\cL_N^{(3)}+\cL_N^{(4)} \end{equation}
 on the  truncated excitation Fock space 
        \[\cF_+^{\leq N} = \bC\oplus\bigoplus_{n=1}^N \bigotimes_{\text{sym}}^n \{\ph_0\}^\bot \hookrightarrow \cF\]
 where
        \begin{equation}\label{eq:cLNj} \begin{split} 
        \cK =\; & \sum_{p \in \Lambda^*_+} p^2 a_p^* a_p, \\
        \cL_N^{(0)} =\;& \frac N2 \wh V (0) -\frac12 \wh V(0) (1-\cN_+/N) - \frac12 \wh V(0) \cN_+^2/N,     \\
        \cL_N^{(2)}=\; & \sum_{p \in \Lambda_+^*} \widehat{V} (p/N)  a_p^* a_p \frac{N-\cN_+}{N} + \frac{1}{2} \sum_{p \in \Lambda^*_+} \widehat{V} (p/N) \left[ b_p^* b_{-p}^* + b_p b_{-p} \right], \\
        \cL_N^{(3)} =\; &\frac{1}{\sqrt{N}} \sum_{p,q \in \Lambda_+^* : p+q \not = 0} \widehat{V} (p/N) \left[ b^*_{p+q} a^*_{-p} a_q  + a_q^* a_{-p} b_{p+q} \right] ,\\
        \cL_N^{(4)} =\; & \frac{1}{2N} \sum_{\substack{p,q \in \Lambda_+^*, r \in \Lambda^*: \\ r \not = -p,-q}} \widehat{V} (r/N) a^*_{p+r} a^*_q a_p a_{q+r}.  
        \end{split} \end{equation}
Here $ U_N: L^2_s(\Lambda^N)\to \cF_+^{\leq N}$ is the unitary map given explicitly by
        \[
        U_N\psi_N  = \bigoplus_{n=0}^{N} (1-|\varphi_0\rangle\langle\varphi_0|)^{\otimes n}
        \frac{a(\varphi_0)^{N-n}}{\sqrt{(N-n)!}} \psi_N
        \]
for $\psi_N\in L^2_s(\Lambda^N)$, as introduced in \cite{LNSS} and 
        \[ b_p^* = a^*_p (1-\cN_+/N)^{1/2}, \quad b_p = (1-\cN_+/N)^{1/2} a_p, \quad p\in\Lambda_+^*,  \]
denote modified creation and annihilation operators introduced in \cite{BS}.  Anticipating that low energy states exhibit complete BEC into the zero momentum mode $\ph_0$ (so that $ a^*_0a_0 \approx N +o(N)$ for such states), one may keep in mind that $b_p^*$ and $ b_p $ act approximately like the usual creation and annihilation operators on such states. As explained in \cite{BS,BBCS1,BBCS2,BBCS3,BBCS4}, the advantage of using $b_p^*$ and $ b_p $ is that they leave the truncated Fock space $\cF_+^{\leq N}$ invariant. A similar version of these  modified creation and annihilation operators was used earlier in the mean-field scaling \cite{Sei,GS}.         

\subsection{Unitary Renormalizations of $\cL_N$} 

In order to understand the spectral property of $\cL_N$ (which is  unitarily equivalent to $H_N$), we use suitable renormalizations introduced in \cite{BBCS3, BBCS4}.  
In the first step, we need to extract the leading order contribution  to the ground state energy $E_N$ of $H_N$. 
It is well-known \cite{LY, LSY, LS1, LS2, NRS} that $E_N = 4\pi \mathfrak{a} N + o(N)$ for some error $o(N)$, which is actually bounded uniformly in $N$ \cite{BBCS1, BBCS4,NNRT, BSS1,H,NR}. 
To extract this contribution and to be able to study the fluctuations of $\cL_N$ around it efficiently, it turns out to be useful to renormalize $\cL_N$ through a unitary generalized Bogoliubov transformation $ e^{B_\eta}: \cF_+^{\leq N}\to \cF_+^{\leq N}$ with exponent  
        \be \label{eq:defren1}   B_\eta = \frac12\sum_{p\in \Lambda_+^*} \eta_p \big( b^*_p b^*_{-p} - b_p b_{-p}\big) = -B_\eta^*.  \ee
The kernel $\eta \in \ell^2 (\Lambda^*_+)$ is chosen in order to incorporate relevant short range correlations among the particles. To this end, we will replace the zero-scattering solution in \eqref{eq:scattering-full-space} by a truncated one which is identically equal to $1$ outside a finite ball. To be precise, we consider the ground state solution of the Neumann problem 
        \[ \left[ -\Delta + \frac{1}{2} V \right] f_{\ell} = \lambda_{\ell} f_\ell  \]
on the  ball $|x| \leq N\ell$, normalized so that $f_\ell (x) = 1$ if $|x| = N \ell$. We choose $\ell>0 $ to be sufficiently small, but fixed, and 
we extend $f_\ell$ by one outside the ball $B_{N\ell}(0)$ of radius $N\ell$. By slight abuse of notation, we denote this extension again by $f_\ell$. By scaling, we get 
        \[ 
         \left[ -\Delta + \frac{N^2}{2} V (Nx) \right] f_\ell (Nx) = N^2 \lambda_\ell f_\ell (Nx) \chi_\ell (x), 
        \]
where $\chi_\ell$ denotes the characteristic function of the ball of radius $\ell$. Setting $w_\ell = 1-f_\ell$, which is supported in $B_{N\ell}(0)$, the kernel $\eta \in \ell^2 (\Lambda^*_+)$ is defined as the Fourier transform of
        \[ \Lambda \ni x\mapsto \check{\eta}(x) = - N w_\ell (Nx) \in L^2(\Lambda). \]
One has the basic estimates (see \cite[Lemma 3.1]{BBCS3})
        \be \label{eq:besteta}
        | \check{\eta}(x)| \leq \frac{C\chi_\ell (x)}{|x|+1/N }, \hspace{0.5cm}|\eta_p|\leq \frac{C}{|p|^2},  \hspace{0.5cm} \| \eta\|_2^2 \leq C\ell, 
        \ee
and the coefficients of $\eta=(\eta_p)_{p\in\Lambda_+^*}$ satisfy 
        \[|p|^2 \eta_p + \frac12 \widehat{(Vf_\ell)}(p/N) = N^3 \lambda_\ell \big( \widehat{\chi}_\ell\ast \widehat{f_\ell}(./N) \big)_p.\]
Given $(\eta_p)_{p\in\Lambda_+^*}$ as defined above, we now set
        \[\cG_N = e^{-B_\eta} \cL_N e^{B_\eta},\hspace{0.5cm} \cH_N = \sum_{p\in\Lambda_+^*} a^*_pa_p +\frac{1}{2N} \sum_{\substack{p,q \in \Lambda_+^*, r \in \Lambda^*: \\ r \not = -p,-q}} \widehat{V} (r/N) a^*_{p+r} a^*_q a_p a_{q+r}  = \cK+\cV_N.\]
The next result collects important properties of the renormalized Hamiltonian $\cG_N$ and follows from \cite[Prop. 3.2, Prop. 4.1 \& Lemma 6.1]{BBCS3} and \cite[Proposition 6.1]{BBCS4}. In the following, we write
        \[\lambda_j(A) = \inf_{\substack{ V\subset \cF_+^{\leq N}, \\ \text{dim} V = j } } \sup_{\substack { \psi \in V, \\ \|\psi\|=1 }}  \langle \psi, A \psi\rangle \]
for the $j$-th min-max value of a given semibounded, self-adjoint operator $A$ in $\cF_+^{\leq N}$. Moreover, $ {\bf{1}}_{\Sigma}(A)$ denotes its spectral projection onto a subset $\Sigma\subset \bR$. By standard results, the spectrum of $H_N$ is discrete and its eigenvalues, counted with multiplicity, are equal to the min-max values $\lambda_j(H_N)$, for $j\in\bN$. In the rest of this paper we denote by $ E_N = \lambda_1(H_N) $ the ground state energy of $H_N$. In particular, $ \cP_j(A)\equiv {\bf{1}}_{(-\infty, \lambda_j(A)]}(A)$ corresponds to the projection onto the spectral subspace corresponding to the first $j$ eigenvalues of $A$, counted with multiplicity.

\begin{proposition}[\cite{BBCS3,BBCS4}]\label{prop:GN}
Set $\cG_N'=\cG_N-E_N$, then the following holds true:
\begin{enumerate}[a)]

\item We have that $ E_N = 4\pi  \mathfrak{a}N + O(1)$ and $\cG_N'$ is equal to
\begin{equation}\label{eq:GDelta} \cG_N' = \cH_N + \cE_{\cG_N'} \end{equation}
for an error term $\cE_{\cG_N'}$ which is such that for every $\delta >0$, there exists $C_\delta > 0$ so that 
\begin{equation}\label{eq:Delta-bd} \pm \cE_{\cG_N'} \leq \delta \cH_N + C_\delta (\cN_+ + 1) \, . \end{equation}

\item There exist constants $c, C>0$ so that $\cG_N'$ satisfies the coercivity bound
        \be\label{eq:coerc} \cG_N' \geq  c \,\cK -C.    \ee
        
\item There exist constants $c, C>0$ such that for every $j\in\bN$, we have
        \be \label{eq:mmbndK} c \lambda_j (\cK) - C \leq \lambda_j (\cG_N')\leq C \lambda_j (\cK) + C N^{-1}\lambda_j^{7/2}(\cK). \ee

\item Let $ \cQ_\zeta= \operatorname{ran}\big( {\bf 1}_{[0,  \zeta]} (\cG_N')\big) $ for $\zeta >0$. Then, for every $k\in \bN$, there exists $C_k>0$ so that
        \begin{equation}\label{eq:N+k}
        \begin{split}
       &  \sup_{\substack{   \xi_N\in \cQ_\zeta, \|\xi_N\|=1   }}  \langle \xi_N, (\cN_+ +1)^k (\cH_N+1) \xi_N \rangle \leq C_k (1 + \zeta^{k+1}). 
        \end{split}
        \end{equation}
\end{enumerate}
\end{proposition}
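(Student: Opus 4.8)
This proposition compiles several results of \cite{BBCS3,BBCS4}, and the plan is to reconstruct them in the order (a), (b), (c), (d). The central computation is to expand $\cG_N=e^{-B_\eta}\cL_N e^{B_\eta}$ by conjugating each of the five terms $\cK,\cL_N^{(0)},\dots,\cL_N^{(4)}$ of \eqref{eq:cLN} separately. The basic tool is a Duhamel (nested‑commutator) expansion for the adjoint action of $e^{B_\eta}$: one writes $e^{-B_\eta}b_p\,e^{B_\eta}=\cosh(\eta_p)\,b_p+\sinh(\eta_p)\,b_{-p}^*+d_p$, and analogously for $b_p^*$ and for the monomials $a_p^*a_q$ occurring in the interaction, where the remainders $d_p$ and all higher commutators are estimated in form sense in terms of $\|\eta\|_2$ and powers of $(\cN_++1)N^{-1}$ by means of the bounds \eqref{eq:besteta}. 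Substituting these expansions, I would then invoke the scattering identity for $\eta$ recorded before the proposition, which ties $|p|^2\eta_p$ to $\tfrac12\wh{(Vf_\ell)}(p/N)$, to carry out the crucial cancellations: the $O(N)$ constants coming from $\tfrac N2\wh V(0)$ in $\cL_N^{(0)}$, together with the $\cosh$--$\sinh$ constants generated when the quadratic and quartic terms are conjugated, combine into $4\pi\mathfrak{a}N+O(1)$ (using $\tfrac12\wh{(Vf_\ell)}(0)=4\pi\mathfrak{a}+O(N^{-1})$), while the renormalized kinetic and quartic contributions reassemble into $\cH_N=\cK+\cV_N$. Everything remaining is collected into $\cE_{\cG_N'}$, and the only task left for (a) is the relative bound \eqref{eq:Delta-bd}. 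The off‑diagonal quartic remainders and the Duhamel errors are absorbed into $\delta\cV_N+C_\delta(\cN_++1)$ by Cauchy--Schwarz, writing $\cV_N$ as an integral of a pointwise nonnegative density ($V\ge0$) so as to extract square roots; the delicate term is the leftover cubic expression $\sim N^{-1/2}\sum\wh V(p/N)\,[b_{p+q}^*a_{-p}^*a_q+\hc]$, which is not small in norm but, paired again against the square root of $\cV_N$ and using the explicit prefactor $N^{-1/2}$, satisfies $\pm(\text{cubic})\le\delta\cV_N+C_\delta(\cN_++1)$. Since $\cV_N\le\cH_N$, this proves (a).

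For (b), I would first note that $\cG_N$ is unitarily equivalent to $H_N$, so $\cG_N'\ge0$, and then use the complete Bose--Einstein condensation estimate $\cN_++1\le C(\cG_N'+1)$ — i.e.\ the number of excitations is controlled by the excitation energy — which is the key a priori input of \cite{BBCS3} and itself follows from the sharp bound $E_N\ge4\pi\mathfrak{a}N-C$ by an $\cN_+$‑localization argument in the spirit of \cite{LS1,LS2}. Combining this with (a) at $\delta=\tfrac12$ gives $\tfrac12\cH_N\le\cG_N'+C(\cN_++1)\le C'(\cG_N'+1)$, hence $\cK\le\cH_N\le C''(\cG_N'+1)$, which is \eqref{eq:coerc}.

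Part (c) then follows from the min‑max principle. The lower bound $\lambda_j(\cG_N')\ge c\lambda_j(\cK)-C$ is immediate from \eqref{eq:coerc}. For the upper bound I would test $\cG_N'$ on the $j$‑dimensional subspace $\cP_j(\cK)\cF_+^{\le N}$ spanned by the lowest eigenvectors of $\cK$; these carry a bounded number $m\lesssim\lambda_j(\cK)$ of excitations, all of momenta $\lesssim\lambda_j(\cK)^{1/2}$. On that subspace (a) gives $\cG_N'\le C\cH_N+C(\cN_++1)$, the gap $\cK\ge(2\pi)^2\cN_+$ gives $\langle\cN_+\rangle\le C\lambda_j(\cK)$ and $\langle\cK\rangle\le\lambda_j(\cK)$, and the momentum cutoff restricts the transferred momentum $r$ in $\cV_N$ to $|r|\lesssim\lambda_j(\cK)^{1/2}$, so that $\langle\cV_N\rangle\lesssim N^{-1}m^2\lambda_j(\cK)^{3/2}\lesssim N^{-1}\lambda_j(\cK)^{7/2}$; assembling these bounds yields \eqref{eq:mmbndK}.

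Finally, for (d) I would use that $\xi_N\in\cQ_\zeta$ forces $\langle\xi_N,(\cG_N')^m\xi_N\rangle\le\zeta^m$ for every $m\ge0$ (since $0\le\cG_N'\le\zeta$ there), together with the form inequality, valid on $\cQ_\zeta$, $(\cN_++1)^k(\cH_N+1)\le C_k(\cG_N'+1)^{k+1}$. The case $k=0$ is $\cH_N+1\le C(\cG_N'+1)$, already obtained in (b). For the inductive step one commutes a power of $(\cN_++1)$ through $\cG_N'$: since $[\cH_N,\cN_+]=0$, one has $[\cG_N',\cN_+]=[\cE_{\cG_N'},\cN_+]$, which involves only the cubic term and the off‑diagonal quadratic and quartic terms of the renormalized Hamiltonian, each bounded in form sense by products of strictly lower powers of $(\cN_++1)$ and $(\cH_N+1)^{1/2}$; inserting these commutator estimates and iterating, with (a) and (b) used at every stage, closes the induction. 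I expect this last part to be the main obstacle: establishing (d) with the sharp powers $\zeta^{k+1}$ demands a systematic and rather delicate control of the error operator $\cE_{\cG_N'}$ and of its repeated commutators with $\cN_+$ — this bookkeeping, rather than any isolated conceptual point, is the technical core, carried out in \cite[Proposition 6.1]{BBCS4}.
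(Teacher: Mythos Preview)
Your reconstruction is essentially correct and aligned with the paper's approach, which defers parts (a) and (d) entirely to \cite[Prop.~3.2 \& 4.1]{BBCS3} and only spells out (b) and (c). Your outlines of (a), (b), (c) match the paper's (and \cite{BBCS3}'s) arguments closely; in particular your treatment of (c), testing $\cG_N'$ on the low-$\cK$ eigenspace and exploiting the momentum cutoff $|q|\le\lambda_j(\cK)^{1/2}$ to bound $\langle\cV_N\rangle$, is exactly the paper's estimate \eqref{eq:VNbnd}.

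Two small corrections. First, you have swapped the citations: the a priori BEC bound $\cN_+\le C(\cG_N'+1)$ used in (b) is the content of \cite[Prop.~6.1]{BBCS4} (not \cite{BBCS3}), whereas part (d) is \cite[Prop.~4.1]{BBCS3} (not \cite{BBCS4}). Second, for (b) the paper does not use $\cG_N'\ge0$ at all; it takes a convex combination $\cG_N'\ge\eps\big((1-\delta)\cH_N-C_\delta(\cN_++1)\big)+(1-\eps)(c\,\cN_+-C)$, choosing $\eps,\delta$ small so that the $\cN_+$ terms have a positive net coefficient. Your route via $\cN_++1\le C(\cG_N'+1)$ and then $\tfrac12\cH_N\le\cG_N'+C(\cN_++1)$ is equivalent. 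For (d), your proposed commutator induction is a reasonable sketch, but be aware that the actual proof in \cite[Prop.~4.1]{BBCS3} does not establish a global form inequality $(\cN_++1)^k(\cH_N+1)\le C_k(\cG_N'+1)^{k+1}$; it instead bounds the expectation on low-energy vectors inductively in $k$, using the decomposition of $\cG_N'$ term by term rather than the single error operator $\cE_{\cG_N'}$.
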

\begin{proof}
    Parts $a)$ and $d)$ follow directly from \cite[Propositions 3.2 \& 4.1]{BBCS3}; for part $d)$, see in particular \cite[Eq. (4.4)]{BBCS3}. As observed already in \cite[Eq. (1.21) \& (1.22)]{BBCS3}, the coercivity bound \eqref{eq:coerc} follows from combining the lower bound 
            \[\cL_N \geq 4\pi  \mathfrak{a}N + c \,\cN_+ -C,\]
    which is a direct consequence of \cite[Proposition 6.1]{BBCS4}, with \eqref{eq:Delta-bd} and the fact that 
            \[\sup_{t\in[-1,1]}    e^{-tB_\eta}\cN_+ e^{tB_{\eta}} \leq (1+ C \|\eta\|_2) \cN_+ \]
    for some $C>0$. The latter bound readily follows from a first order Taylor expansion and Gr\"onwall's lemma. Using \eqref{eq:besteta} and recalling that $\ell$ is assumed to small, this implies 
            \[\cG_N' \geq c\, \cN_+ - C\]
    and thus for $\delta, \eps >0$ sufficiently small that 
            \[\begin{split} \cG_N' &\geq \eps\big(  (1-\delta) \,\cH_N - C_\delta(\cN_++1)\big) + (1-\eps)\big( c \,\cN_+ -C\big)\geq  c\, \cH_N -C \geq c \, \cK- C.
            \end{split}\]

To prove part $c)$, note that $ c \lambda_j (\cK) - C \leq \lambda_j (\cG_N')$ is a direct consequence of \eqref{eq:coerc}. The other direction can be proved as in \cite[Lemma 6.1]{BBCS3}: the eigenvalues of $\cK$ are explicit and of the form
        \be\label{eq:evK} \lambda_j(\cK) = \sum_{p \in \Lambda^*_+}  n_p^{(j)} |p|^2 \ee
with coefficients $n_p^{(j)} \in \bN$ which are non-zero for finitely many $p\in\Lambda_+^*$, for every $j\in\bN$. The corresponding normalized eigenvectors can be chosen as 
        \begin{equation}\label{eq:xij} \xi_j = C_j \prod_{p \in \Lambda^*_+} (a^*_p)^{n_p^{(j)}} \Omega, \end{equation}  
for suitable normalization constant $C_j > 0$. If $\xi_j $ is such an eigenvector to eigenvalue $\lambda_j(\cK)$, we have $ n_q^{(j)}=0$ and thus $ a_q\, \xi_j = 0$ whenever $|q| >\lambda_j^{1/2}(\cK)$. As a consequence, we find that
        \be \label{eq:VNbnd} \begin{split} 
        \langle \xi_j, \cV_N \xi_j \rangle &\leq N^{-1} \sum_{p,q,r \in \Lambda_+^* }  |\widehat{V} (r/N)| \| a_{q+r} a_p \xi_j \| \| a_{p+r} a_q \xi \| \\ &\leq CN^{-1}  \sum_{\substack{ p,q,r \in \Lambda_+^* :\\ |p| , |q| , |p+r|, |q+r| \leq \lambda_j^{1/2}(\cK) }}  \| a_{q+r} a_p \xi \| \| a_{p+r} a_q \xi_j \| \\
        &\leq  C N^{-1} \lambda_j^{3/2}(\cK) \| (\cN_+ + 1) \xi_j \|^2\leq C N^{-1} \lambda_j^{7/2},
        \end{split} \ee
where in the last step we used that $\cN_+^2\leq \cK^2$. Combining this with \eqref{eq:GDelta}, \eqref{eq:Delta-bd} and the min-max principle, we conclude \eqref{eq:mmbndK}.   
\end{proof}

Proposition \ref{prop:GN} provides the leading order contribution to $E_N$, up to an $O(1)$ error, and it proves via \eqref{eq:N+k} a strong quantitative form of BEC into $\ph_0$ for low energy states. To go one step further and determine the order one contributions to the low energy spectrum of $H_N$, one needs to renormalize the cubic contributions (in creation and annihilation operators) to $\cG_N$. Once this is done, one obtains a renormalized excitation Hamiltonian whose order one contributions are fully contained in its quadratic part, and this can be (approximately) diagonalized. We summarize both steps in the next proposition. For a precise statement, we recall a few definitions from \cite{BBCS3}: we define $A_\eta:\cF_+^{\leq N}\to \cF_+^{\leq N}$ by 
        \begin{equation}\label{eq:defA}
        A_\eta = N^{-1/2} \sum_{\substack{ r\in P_H, v\in P_L }} \eta_r \big[ \sinh(\eta_v) b^*_{r+v} b^*_{-r}b^*_{-v} +  \cosh(\eta_v) b^*_{r+v} b^*_{-r}b_{v} - \text{h.c.} \big],           
        \end{equation}
where
        \begin{equation*}
         \hspace{2cm}P_L  = \big\{\, p \in \Lambda_+^* : |p|\leq N^{1/2}\big\} \quad\text{and}\quad 
         P_H =\Lambda_+^* \setminus P_L \nonumber
        \end{equation*}
and $B_\tau:\cF_+^{\leq N}\to \cF_+^{\leq N}$ by
        \be \label{eq:deftau}\begin{split}
        B_\tau &= \frac{1}{2}\sum_{p\in\Lambda_+^*} \tau_p \big( b^*_pb^*_{-p} - b_p b_{-p} \big), \text{ where }  \tau_{p} = - \frac14 \log \big[   \big(1 + 2|p|^{-2}\widehat{(Vf_\ell) } (p/N)\big)\big] - \eta_p. 
        \end{split}
        \ee
It follows from \cite[Lemma 5.1]{BBCS3} that for all $p\in\Lambda_+^*$ we have
        \be \label{eq:taubnd}|\tau_p|\leq C |p|^{-4}.\ee
Conjugation by $ e^{-A_\eta}$ renormalizes the cubic contribution to $\cG_N$ and conjugation by $ e^{-B_\tau}$ approximately diagonalizes $e^{-A_\eta} \mathcal{G}_N e^{A_\eta}$, up to errors of lower order. In the sequel, we abbreviate
        \be\label{eq:defMND}
        \begin{split}
        \cM_N &= e^{-B_\tau}e^{-A_\eta} \mathcal{G}_N e^{A_\eta} e^{B_\tau} , \;\;\;\cD = \sum_{p\in\Lambda_+^*} \epsilon_p a^*_pa_p\;\; \text{ for }\;\;\epsilon_p  = \sqrt{|p|^4+16\pi \mathfrak{a}|p|^2 }.  
        \end{split}\ee

\begin{proposition}\label{prop:MN}
Let $E_N$ be the ground state energy of $\cM_N$ and set $\cM_N'=\cM_N-E_N$. Then:
\begin{enumerate}[a)]

\item We have that 
        \begin{equation}\label{eq:MNid} \cM_N' = \cD  +\cV_N + \cE_{\cM_N'} \end{equation}
for an error term $\cE_{\cM_N}$ which satisfies 
        \begin{equation}\label{eq:EM-bd} 
        \pm \cE_{\cM_N'}\leq CN^{-1/4} \big[ (\cN_+ +1) (\cH_N+1)+(\cN_+ +1)^3\big].
        \end{equation}

\item Let $\zeta >0$ and set $ \cQ_\zeta= \operatorname{ran}\big( {\bf 1}_{[0,  \zeta]} (\cM_N')\big) $. Then, there exists a constant $C>0$ so that
        \begin{equation}\label{eq:cN3}
        \begin{split}
        &  \sup_{\substack{   \xi_N\in \cQ_\zeta, \|\xi_N\|=1   }} \big( \langle \xi_N, (\cN_+ +1) (\cH_N+1) \xi_N \rangle+\langle \xi_N, (\cN_+ +1)^3 \xi_N \rangle \big) \leq C (1 + \zeta^{3}). 
        \end{split}
        \end{equation}

\item There exists a constant $ C>0$ such that for every $j\in\bN$, we have
        \be \label{eq:mmbndMN}  \lambda_j(\cD)- \frac{C}{N^{1/4}}\big(1+ \lambda_j^3(\cM_N')\big)  \leq \lambda_j (\cM_N')\leq  \lambda_j(\cD) + \frac{C}{N^{1/4}}\big(1+\lambda_j^3(\cD)\big) + \frac{ C }N\lambda_j^{7/2}(\cD).  \ee
        
\item Fix $ j\in\bN$ so that $   \lambda_j(\cD)< \lambda_{j+1} (\cD ) $. Then, there exists $C>0$, that is independent of $j\in\bN$ and $N\in\bN$, such that for $N$ large enough we have that 
        \[\begin{split}
        \big \| \cP_j(\cM_N') - \cP_j(\cD) \big\|_{\emph{\text{HS}}}^2   &\leq  \frac{C}{N^{1/4}}\frac{\big(\emph{dim}\,\operatorname{ran}\,\cP_j(\cD)\big)}{ (\lambda_{j+1}-\lambda_j)}   \big(1+ \lambda_j  + N^{-1/4} \lambda_j^3   + N^{-1}\lambda_j^{7/2}\big)^3,
        \end{split}\]
where $\lambda_j\equiv \lambda_j(\cD)$. In particular, for every fixed $ \zeta >0$, we have for large enough $N$ that
        \be \label{eq:HSbnd}\big \| \emph{\tbf 1}_{[0, \zeta]}(\cM_N') - \emph{\tbf 1}_{[0, \zeta]}(\cD) \big\|_{\emph{\text{HS}}}^2 \leq \frac{C_\zeta}{N^{1/4}},
        \ee
for some constant $C_\zeta>0$ that depends on $ \zeta>0$, but that is independent of $N$.

\end{enumerate}
\end{proposition}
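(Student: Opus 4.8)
The four parts are proved in the order stated: a)--c) adapt the two renormalization steps of \cite{BBCS3,BBCS4}, while d) is the collective statement that is genuinely needed here.

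\textbf{Parts a) and b).} For a) I would start from Proposition~\ref{prop:GN}~a) and conjugate \eqref{eq:GDelta} first by $e^{-A_\eta}$ and then by $e^{-B_\tau}$. The operator $A_\eta$ in \eqref{eq:defA} is designed so that its commutator with $\cG_N$ removes the cubic (in creation/annihilation operators) part of $\cG_N$ up to lower order; a finite commutator expansion together with the growth bounds for $A_\eta$ acting on $(\cN_++1)^k$ and on $(\cN_++1)(\cH_N+1)$ from \cite{BBCS3} produces a renormalized Hamiltonian whose order-one contributions are purely quadratic, modulo errors of the form on the right of \eqref{eq:EM-bd}. The subsequent conjugation by $e^{-B_\tau}$, with $\tau$ as in \eqref{eq:deftau}, turns this quadratic part exactly into $\cD$, leaves $\cV_N$ invariant up to admissible errors, and gives \eqref{eq:MNid}. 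For b) I would use that all of $H_N$, $\cL_N$, $\cG_N$, $\cM_N$ are unitarily equivalent, so their ground state energies agree and $\cM_N' = e^{-B_\tau}e^{-A_\eta}\cG_N' e^{A_\eta}e^{B_\tau}$, whence ${\bf 1}_{[0,\zeta]}(\cM_N') = e^{-B_\tau}e^{-A_\eta}\,{\bf 1}_{[0,\zeta]}(\cG_N')\,e^{A_\eta}e^{B_\tau}$. Thus for unit $\xi_N\in\cQ_\zeta(\cM_N')$ the vector $\psi_N=e^{A_\eta}e^{B_\tau}\xi_N$ lies in $\cQ_\zeta(\cG_N')$, the bounds \eqref{eq:N+k} apply to $\psi_N$, and the same growth bounds used in a) transfer them into \eqref{eq:cN3}.

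\textbf{Part c).} For the upper bound I would apply the min-max principle with the trial subspace spanned by the explicit eigenvectors $\xi_1,\dots,\xi_j$ of $\cD$ associated with $\lambda_1(\cD)\le\dots\le\lambda_j(\cD)$, which are of the product form \eqref{eq:xij}. On a unit vector $\xi$ of this span one has $\langle\xi,\cD\xi\rangle\le\lambda_j(\cD)$, $\langle\xi,\cV_N\xi\rangle\le CN^{-1}\lambda_j^{7/2}(\cD)$ by the computation in \eqref{eq:VNbnd} (using $a_q\xi=0$ for $|q|^2>\lambda_j(\cD)$ and $\cN_+\le C\lambda_j(\cD)$ on the span), and $\langle\xi,\cE_{\cM_N'}\xi\rangle\le CN^{-1/4}(1+\lambda_j^3(\cD))$ by \eqref{eq:EM-bd} with the same explicit bounds on $\cN_+$ and $\cH_N=\cK+\cV_N$; the claimed upper bound follows. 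For the lower bound, take any $j$-dimensional subspace $W\subset\operatorname{ran}\cP_j(\cM_N')\subset\cQ_{\lambda_j(\cM_N')}(\cM_N')$; on $W$ one has $\langle\xi,\cM_N'\xi\rangle\le\lambda_j(\cM_N')$, so \eqref{eq:MNid}, $\cV_N\ge0$ and part b) give $\langle\xi,\cD\xi\rangle\le\lambda_j(\cM_N')+CN^{-1/4}(1+\lambda_j^3(\cM_N'))$ on $W$, while $\lambda_j(\cD)\le\max_{\xi\in W,\|\xi\|=1}\langle\xi,\cD\xi\rangle$ by min-max for $\cD$; this is the lower bound.

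\textbf{Part d).} This is the main new point. Fix $j$ with $g:=\lambda_{j+1}(\cD)-\lambda_j(\cD)>0$ and put $d=\dim\operatorname{ran}\cP_j(\cD)$. By part c), for $N$ large (depending on $j$) the first $j$ eigenvalues of $\cM_N'$ lie within $o(g)$ of those of $\cD$ and $\lambda_{j+1}(\cM_N')\ge\lambda_j(\cD)+g/2$; in particular $\operatorname{rank}\cP_j(\cM_N')=d$ and $\cM_N'-\lambda_k(\cD)\ge g/2$ on $\operatorname{ran}(1-\cP_j(\cM_N'))$ for all $k\le j$. I would then use the identity
\[ \cP_j(\cM_N')-\cP_j(\cD) = -\,\bigl(1-\cP_j(\cM_N')\bigr)\cP_j(\cD) + \cP_j(\cM_N')\bigl(1-\cP_j(\cD)\bigr) \]
and bound each term in Hilbert--Schmidt norm. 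For the first term, expand in the orthonormal basis $\{\xi_k\}_{k\le j}$ of $\operatorname{ran}\cP_j(\cD)$ from \eqref{eq:xij}: by \eqref{eq:MNid}, $(\cM_N'-\lambda_k(\cD))\xi_k=(\cV_N+\cE_{\cM_N'})\xi_k$, and since $\xi_k$ is supported on finitely many modes with $\cN_+,\cK\le C\lambda_j(\cD)$, a vectorial version of \eqref{eq:VNbnd} and operator bounds underlying \eqref{eq:EM-bd} control $\|(\cV_N+\cE_{\cM_N'})\xi_k\|$; the spectral gap then gives $\|(1-\cP_j(\cM_N'))\xi_k\|\le (2/g)\|(\cV_N+\cE_{\cM_N'})\xi_k\|$, and summing the $d$ squares gives the stated bound. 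For the second term, $\|\cP_j(\cM_N')(1-\cP_j(\cD))\|_{\mathrm{HS}}^2=\sum_{l\le d}\|(1-\cP_j(\cD))\psi_l\|^2$ with $\{\psi_l\}$ an orthonormal basis of eigenvectors of $\cM_N'$; here $(\cD-\lambda_l(\cM_N'))\psi_l=-(\cV_N+\cE_{\cM_N'})\psi_l$, part b) controls $\|(\cV_N+\cE_{\cM_N'})\psi_l\|$ because $\psi_l\in\cQ_{\lambda_j(\cM_N')}(\cM_N')$, and $\cD-\lambda_l(\cM_N')\ge g/2$ on $\operatorname{ran}(1-\cP_j(\cD))$ gives $\|(1-\cP_j(\cD))\psi_l\|\le(2/g)\|(\cV_N+\cE_{\cM_N'})\psi_l\|$. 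Collecting the two terms and taking $j$ to be the largest index with $\lambda_j(\cD)\le\zeta$ then yields \eqref{eq:HSbnd}.

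\textbf{Main obstacle.} The delicate step is the second term of d): the eigenvectors $\psi_l$ of $\cM_N'$ are not explicit, and one must upgrade the quadratic-form bound \eqref{eq:EM-bd} (together with the bound on $\cV_N$) into genuine norm estimates $\|(\cV_N+\cE_{\cM_N'})\psi_l\|\le CN^{-1/8}P(\lambda_j)$ with a $j$-uniform polynomial $P$. This requires either operator-norm versions of \eqref{eq:EM-bd} on states with good $\cN_+$- and $\cH_N$-control (which follow from the explicit structure of the error terms in \cite{BBCS3,BBCS4}), or the insertion of an intermediate resolvent $(\cH_N+1)^{-1}$ together with boundedness of $\cV_N(\cH_N+1)^{-1}$ and $\cE_{\cM_N'}(\cH_N+1)^{-1}$ on the relevant spectral subspace. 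Tracking the precise powers of $\lambda_j$ and of the gap in the final bound is then bookkeeping driven by \eqref{eq:cN3}, \eqref{eq:VNbnd}, \eqref{eq:mmbndMN} and the growth bounds for $A_\eta$ and $B_\tau$.
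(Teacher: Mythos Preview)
Your arguments for parts a)--c) follow the paper's proof essentially verbatim: a) and b) are quoted from \cite[Corollary~5.5]{BBCS3}, and c) is the min-max argument with the explicit product eigenvectors of $\cD$ together with \eqref{eq:VNbnd}, $\cV_N\ge 0$ and part~b).

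For part d), however, the paper takes a genuinely different and simpler route that completely avoids the obstacle you flag at the end. Instead of splitting $\cP_j(\cM_N')-\cP_j(\cD)$ and bounding each piece via the eigenvector equations (which, as you correctly note, forces you to upgrade the \emph{form} bound \eqref{eq:EM-bd} to a \emph{norm} bound $\|(\cV_N+\cE_{\cM_N'})\psi_l\|\le\ldots$ on the non-explicit eigenvectors $\psi_l$ of $\cM_N'$), the paper uses only the trace identity
\[
\big\|\cP_j(\cM_N')-\cP_j(\cD)\big\|_{\mathrm{HS}}^2 = n_j(\cD)+n_j(\cM_N') - 2\,\tr\,\cP_j(\cM_N')\cP_j(\cD),
\]
observes that $n_j(\cD)=n_j(\cM_N')=:n_j$ for $N$ large by part~c), and then lower-bounds the last trace. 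The single inequality
\[
CN^{-1/4}n_j\big(1+\lambda_j^3(\cM_N')\big)\;\ge\;\tr\,\cP_j(\cM_N')(\cD-\lambda_j(\cD))(1-\cP_j(\cD))\;\ge\;(\lambda_{j+1}-\lambda_j)\,\tr\,\cP_j(\cM_N')(1-\cP_j(\cD))
\]
does the job: the upper bound uses only that $\cD\le \cM_N'+CN^{-1/4}[\ldots]$ in the sense of forms (from a), b) and $\cV_N\ge 0$) together with part~c), while the lower bound is just the spectral gap of $\cD$. No norm estimate on $\cE_{\cM_N'}$ or $\cV_N$ is ever needed. Your approach can in principle be made to work, but it buys nothing extra here and requires substantially more input from \cite{BBCS3} (operator-norm versions of the error bounds) than the paper's three-line trace argument.
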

\begin{proof} The statements in $ a) $ and $b)$ follow from \cite[Corollary 5.5]{BBCS3}. Part $c)$ follows similarly to Prop. \ref{prop:GN} $c)$: the eigenvalues of $\cD$ are explicit and of the form
        \be\label{eq:evD} \lambda_j(\cD) = \sum_{p \in \Lambda^*_+}  n_p^{(j)} \epsilon_p , \ee
with coefficients $n_p^{(j)} \in \bN$ which are non-zero for finitely many $p\in\Lambda_+^*$, for every $j\in\bN$. The corresponding normalized eigenvectors can be chosen in the same way as in \eqref{eq:xij}, i.e. 
        \begin{equation}\label{eq:xij2} \xi_j = C_j \prod_{p \in \Lambda^*_+} (a^*_p)^{n_p^{(j)}} \Omega, \end{equation} 
for a normalization constant $C_j > 0$. Consequently, the bound \eqref{eq:VNbnd} applies to the eigenvectors of $\cD$ as well (using in this case that $\cN_+^2\leq \cD^2$). Together with $a)$ and $b)$, this implies the upper bound on $ \lambda_j(\cM_N')$ in \eqref{eq:mmbndMN}. For the lower bound, we use that $\cV_N\geq 0 $ and part $b)$ to estimate
        \[\begin{split}
        \lambda_j(\cM')= \inf_{\substack{ V\subset \cF_+^{\leq N}, \\ \text{dim} V = j } } \sup_{\substack { \psi \in V, \\ \|\psi\|=1 }}  \langle \psi, \cM_N' \psi\rangle & = \inf_{\substack{ V\subset\, \text{ran}\,\cP_j(\cM_N') , \\ \text{dim} V = j } } \sup_{\substack { \psi \in V, \\ \|\psi\|=1 }}  \langle \psi, \cM_N' \psi\rangle\\
        & \geq  \inf_{\substack{ V\subset \cF_+^{\leq N}, \\ \text{dim} V = j } } \sup_{\substack { \psi \in V, \\ \|\psi\|=1 }}  \langle \psi, \cD \psi\rangle - C N^{-1/4}\big(1+ \lambda_j^3(\cM')\big).
        \end{split} \]
        
Finally, let's explain part $d)$; we follow \cite[Section 7]{GS} (a result similar to \eqref{eq:HSbnd} has been used in \cite[Eq. (6.6))]{BBCS3}; here we recall the key steps of the proof, for completeness). We have
        \[\begin{split}
        \big \| \cP_j(\cM_N') - \cP_j(\cD) \big\|_{ \text{HS}}^2  & = n_j(\cD)+n_j(\cM_N') - 2 \,\tr \cP_j(\cM_N')\cP_j(\cD)
        \end{split}\]
for $ n_j(\cD) =\text{dim}\,\text{ran}\,\cP_j(\cD) $ and $ n_j(\cM_N') =\text{dim}\,\text{ran}\,\cP_j(\cM_N') $. For $j$ fixed, part $c)$ implies that 
        \[ | \lambda_k(\cD) - \lambda_k(\cM_N')| \leq C N^{-1/4 }   \]
for some $C=C_j>0$ and for every $k\leq j+1$ so that $n_j\equiv n_j(\cD) = n_j(\cM_N')$ for $N$ large enough. 

Now, applying once more parts $a), b)$ and  $c)$, we obtain
        \[\begin{split}
        CN^{-1/4} n_j (1+ \lambda_j^3(\cM_N') )& \geq \tr\cP_j(\cM_N')\big(\cD-\lambda_j(\cD) \big)(1-\cP_j(\cD)\big)\\
        &\geq \big(\lambda_{j+1}(\cD)- \lambda_j(\cD)\big) \,\tr\cP_j(\cM_N') (1-\cP_j(\cD)\big)
        \end{split}\]
so that 
        \[ \tr \cP_j(\cM_N')\cP_j(\cD) \geq n_j -\frac{C\,n_j}{N^{1/4}} \frac{ (1+ \lambda_j^3(\cM_N') )}{\big(\lambda_{j+1}(\cD)- \lambda_j(\cD)\big)}.  \]
This proves the first claim. The bound \eqref{eq:HSbnd} is an immediate consequence by choosing some $j\in\bN$ such that $ \lambda_j(\cD)\leq\zeta <\lambda_{j+1}(\cD)$ and applying the first step. 
\end{proof}

\subsection{Conjugation of Basic Observables}

In this section we provide basic results on the conjugation of operators by the unitary maps defined in \eqref{eq:defren1}, \eqref{eq:defA} and \eqref{eq:deftau}. Slightly generalizing \cite[Lemma 2.1 \& Prop. 4.2]{BBCS3}, we first record the following basic lemma (the proof follows with the same arguments as \cite[Lemma 2.1 \& Prop. 4.2]{BBCS3}). 
\begin{lemma} \label{lm:cNkbnds} 
For every $k\in \bR$, there exists $C>0$ such that 
        \[\begin{split} 
        \sup_{t\in [-1,1]} e^{-tB_\eta} (\cN_++1)^k e^{tB_\eta}&\leq C(\cN_++1)^k,\\
        \sup_{t\in [-1,1]} e^{-tA_\eta} (\cN_++1)^k e^{tA_\eta}&\leq C(\cN_++1)^k,\\
        \sup_{t\in [-1,1]} e^{-tB_\tau} (\cN_++1)^k e^{tB_\tau}&\leq C(\cN_++1)^k.  \end{split}\]
\end{lemma}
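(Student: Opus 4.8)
The statement to prove is Lemma~\ref{lm:cNkbnds}, the uniform bounds on conjugations of $(\cN_++1)^k$ by the one-parameter groups generated by $B_\eta$, $A_\eta$ and $B_\tau$. Here is how I would proceed.

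\bigskip

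\noindent\textbf{Strategy.} For each of the three generators, the plan is to use the standard Gr\"onwall/differential-inequality scheme: set $F_k(t) = \langle \psi_t, (\cN_++1)^k \psi_t\rangle$ for $\psi_t = e^{tB}\psi$ with $B\in\{B_\eta, A_\eta, B_\tau\}$, compute $\frac{d}{dt} F_k(t) = \langle \psi_t, [(\cN_++1)^k, B]\,\psi_t\rangle$ (as a quadratic form, valid on a suitable dense domain and then extended), bound the commutator by $C(\cN_++1)^k$ as a form, and conclude $F_k(t)\leq e^{C|t|}F_k(0)$, i.e. $\sup_{t\in[-1,1]} e^{-tB}(\cN_++1)^k e^{tB}\leq C(\cN_++1)^k$. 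For integer $k\geq 0$ this is essentially \cite[Lemma 2.1 \& Prop. 4.2]{BBCS3}; the only genuinely new point is allowing arbitrary $k\in\bR$, which I address below.

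\bigskip

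\noindent\textbf{Key steps.} First, for $B_\eta$ (and analogously $B_\tau$): since $b_p^*, b_p$ change $\cN_+$ by at most $2$, one has the pull-through identity $(\cN_++1)^k b_p^* b_{-p}^* = b_p^* b_{-p}^* (\cN_++3)^k$, and similarly for the $bb$ term, so the commutator $[(\cN_++1)^k, B_\eta]$ can be written as $\tfrac12\sum_p \eta_p\big(b_p^*b_{-p}^*[(\cN_++3)^k - (\cN_++1)^k] - \text{h.c.}\big)$ up to reordering. The bracket $(\cN_++3)^k - (\cN_++1)^k$ is bounded by $C_k(\cN_++1)^{k-1}$ (for $|k|$ in any bounded range; the mean value theorem gives this for all $k\in\bR$ since $\cN_+$ has integer spectrum and $t\mapsto (t+1)^k$ has derivative controlled by $(t+1)^{k-1}$ on $[0,\infty)$). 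Then a Cauchy--Schwarz estimate in $p$, using $\|\eta\|_2<\infty$ and $\|b_p\xi\|, \|b_p^*\xi\|\leq \|a_p\xi\| + N^{-1/2}\|\cdots\|\lesssim \|(\cN_++1)^{1/2}a_p$-type bounds as in \eqref{eq:abd}, gives $|\langle\psi_t, [(\cN_++1)^k, B_\eta]\psi_t\rangle|\leq C_k \langle \psi_t, (\cN_++1)^k\psi_t\rangle$; this is exactly the form bound needed for Gr\"onwall. Second, for $A_\eta$ the argument is the same but with cubic monomials $b^*b^*b^*$, $b^*b^*b$: these shift $\cN_+$ by at most $3$, the coefficients $\eta_r\sinh(\eta_v)$, $\eta_r\cosh(\eta_v)$ are summable over $r\in P_H$, $v\in P_L$ (using $|\eta_p|\leq C|p|^{-2}$ from \eqref{eq:besteta}, $|\sinh(\eta_v)|\leq C|\eta_v|$, $|\cosh(\eta_v)|\leq C$), and the $N^{-1/2}$ prefactor only helps; the commutator $(\cN_++4)^k - (\cN_++1)^k$ is again $\leq C_k(\cN_++1)^{k-1}$, and one absorbs the extra $(\cN_++1)^{-1/2}$ and the three $b$'s against $(\cN_++1)^k$ via Cauchy--Schwarz exactly as before. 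For negative $k$ one may first establish the bound for $k\geq 0$, then note that conjugation is a $\ast$-operation: $e^{-tB}(\cN_++1)^{-|k|}e^{tB}\leq C(\cN_++1)^{-|k|}$ is equivalent, by taking inverses and using that $e^{tB}$ is unitary, to $e^{tB}(\cN_++1)^{|k|}e^{-tB}\leq C(\cN_++1)^{|k|}$, which is the already-proven bound with $t\mapsto -t$. Finally, to make the formal commutator computation rigorous one restricts to vectors in a core (finite particle number, so everything is a finite sum), proves the differential inequality there, integrates, and extends by density/monotone limits; alternatively one regularizes $(\cN_++1)^k$ by $(\cN_++1)^k(1+\eps\cN_+)^{-M}$ and removes the cutoff at the end.

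\bigskip

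\noindent\textbf{Main obstacle.} The only subtlety beyond \cite{BBCS3} is the extension from integer to real $k$, and in particular negative $k$: for $k<0$ the operator $(\cN_++1)^k$ is bounded, so there is no domain issue, but one must argue the bound carefully since ``$[(\cN_++1)^k, B]\leq C(\cN_++1)^k$'' with a negative power requires the duality/inverse argument above rather than a direct form estimate (a naive Cauchy--Schwarz would produce $(\cN_++1)^{k-1}$-type terms that are \emph{larger}, not smaller, than $(\cN_++1)^k$ when $k<0$). Handling this cleanly via the unitarity-plus-$t\to-t$ reduction is the one place requiring a moment's care; everything else is a routine repetition of the generalized-Bogoliubov estimates already present in \cite{BBCS3}.
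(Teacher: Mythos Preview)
Your proposal is correct and matches the paper's approach: the paper gives no proof at all beyond the parenthetical remark that ``the proof follows with the same arguments as \cite[Lemma 2.1 \& Prop. 4.2]{BBCS3},'' and those arguments are precisely the Gr\"onwall/commutator scheme you outline. Your explicit handling of real and negative $k$ via the mean-value bound $(\cN_++m)^k-(\cN_++1)^k\leq C_k(\cN_++1)^{k-1}$ together with the duality step (operator-antimonotonicity of the inverse plus $t\mapsto -t$) is exactly the ``slight generalization'' the paper alludes to but does not spell out.
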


More importantly in view of the proof of Theorem \ref{thm:main} is the following simple result (related observations have already been used in \cite{RS}).
\begin{lemma} \label{lm:conj1}
Let $ \nu \in \ell^2(\Lambda_+^*)$ be defined by
        \be\label{eq:defnu}\nu_p = - \frac14 \log \big(1+ 16\pi \mathfrak{a} |p|^{-2}\big).   \ee
Then, we have in the sense of forms in $\cF_+^{\leq N}$ that
        \[\begin{split}
        e^{-B_\tau} e^{-A_\eta} e^{-B_\eta} \cN_+  e^{B_\eta} e^{A_\eta} e^{B_\tau} &= e^{-K_{\nu} } \cN_+ e^{K_{\nu} } + \cE_{\cN_+},
        \end{split}\]
where $ e^{K_{\nu}}$ is a standard Bogoliubov transformation of the form
        \[K_{\nu} =  \frac12\sum_{p\in\Lambda_+^*} \nu_p \big( a_p^*a^*_{-p} -a_pa_{-p}\big),\]
and where the error $\cE_{\cN_+} $ satisfies $\pm \cE_{\cN_+}\leq CN^{-1/2} (\cN_++1)^{3/2}.$ Similarly, we have for every $q\in\Lambda_+^*$  
        \[\begin{split}
        e^{-B_\tau} e^{-A_\eta} e^{-B_\eta} a^*_q a_q  e^{B_\eta} e^{A_\eta} e^{B_\tau} &= e^{-K_{\nu} } a^*_q a_q e^{K_{\nu} } + \cE_q\\
        \end{split}\]
for some error $\cE_{q} $ that satisfies $ \pm \cE_q\leq CN^{-1/2} (\cN_++1)^{3/2}.$ 
\end{lemma}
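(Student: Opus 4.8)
The plan is to track how the conjugation by the three consecutive unitaries $e^{B_\eta}$, $e^{A_\eta}$, $e^{B_\tau}$ acts on $\cN_+ = \sum_{p \in \Lambda_+^*} a_p^* a_p$ and on each individual $a_q^* a_q$, reducing everything to the action of a single standard Bogoliubov transformation $e^{K_\nu}$ up to controllable errors. The key structural observation is that, to leading order, $e^{B_\eta}$ followed by $e^{B_\tau}$ should compose to $e^{K_\nu}$ because the $b$-operators act approximately as the usual $a$-operators on the truncated Fock space, and the exponents add: $\eta_p + \tau_p = -\tfrac14\log(1 + 2|p|^{-2}\widehat{(Vf_\ell)}(p/N)) \to -\tfrac14 \log(1 + 16\pi\mathfrak{a}|p|^{-2}) = \nu_p$ since $\widehat{(Vf_\ell)}(p/N) \to \widehat{(Vf)}(0) = 8\pi\mathfrak{a}$ as $N\to\infty$ (using the scattering equation). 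The cubic transformation $e^{A_\eta}$ should contribute only lower-order corrections because $A_\eta$ carries an explicit prefactor $N^{-1/2}$.

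First I would replace each $b_p^\sharp$ by $a_p^\sharp$ in the exponents $B_\eta$ and $B_\tau$, controlling the difference via the standard estimate $\|(b_p - a_p)\Psi\| = \|((1-\cN_+/N)^{1/2}-1)a_p\Psi\| \lesssim N^{-1}\|(\cN_+ +1)^{1/2} a_p \Psi\|$ and the $\ell^2$ bounds \eqref{eq:besteta}, \eqref{eq:taubnd} on $\eta$ and $\tau$; combined with Lemma \ref{lm:cNkbnds} and a Gr\"onwall/Duhamel argument this shows that conjugation by $e^{B_\eta}$ equals conjugation by $e^{K_\eta}$ (standard Bogoliubov with kernel $\eta$) up to errors bounded by $CN^{-1}(\cN_++1)^{3/2}$ when applied to $\cN_+$ or $a_q^*a_q$, and similarly for $e^{B_\tau}$. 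Second, I would handle $e^{A_\eta}$: Duhamel's formula gives $e^{-A_\eta}\cN_+ e^{A_\eta} = \cN_+ + \int_0^1 e^{-sA_\eta}[\cN_+, A_\eta] e^{sA_\eta}\,ds$, and since $[\cN_+, A_\eta]$ is again of the schematic form $N^{-1/2}\sum \eta_r(\cdot)_v\,b^\sharp b^\sharp b^\sharp$ (the commutator with $\cN_+$ only changes coefficients by bounded factors), the $N^{-1/2}$ prefactor together with the summability of $\eta$ yields $\pm(e^{-A_\eta}\cN_+ e^{A_\eta} - \cN_+) \leq CN^{-1/2}(\cN_++1)^{3/2}$ after using Lemma \ref{lm:cNkbnds}; the same works for $a_q^*a_q$. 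Third, I would observe that since $A_\eta$ commutes with $\cN_+$ only up to this small error, and $e^{K_\eta}$ and $e^{K_\tau}$ are standard Bogoliubov transformations with commuting generators (both built from $a_p^*a_{-p}^*$ with real coefficients), they compose exactly to $e^{K_\nu}$ precisely when $\eta_p + \tau_p = \nu_p$; I would absorb the discrepancy $\tau_p - (\nu_p - \eta_p) = \tfrac14\log[(1+16\pi\mathfrak{a}|p|^{-2})/(1+2|p|^{-2}\widehat{(Vf_\ell)}(p/N))]$ into the error, noting it is $O(N^{-1}|p|^{-2})$ by the smoothness of $\widehat{Vf_\ell}$ and hence contributes $O(N^{-1})(\cN_++1)$ at the level of the conjugated operators.

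The main obstacle is the bookkeeping in the third step: composing the Bogoliubov transformations $e^{K_\eta}e^{K_\tau}$ and quantifying the error from the mismatch between $\tau_p$ and $\nu_p - \eta_p$, together with carefully propagating all the $(\cN_++1)^k$-type bounds through the full chain of six conjugations (three forward, three backward after inserting $e^{K_\nu}$), since each reordering step needs Lemma \ref{lm:cNkbnds} to convert an error of the form (small coefficient)$\times$(polynomial in $\cN_+$) evaluated at a conjugated state back into a clean bound of the form $CN^{-1/2}(\cN_++1)^{3/2}$. None of the individual estimates is deep, but assembling them so that the final error is genuinely $\pm\cE \leq CN^{-1/2}(\cN_++1)^{3/2}$ (and not, say, with a worse power of $\cN_+$) requires being careful that the worst contribution — coming from the cubic term $A_\eta$ — really does close at the stated order. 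For $a_q^*a_q$ the argument is identical since $[a_q^*a_q, B_\eta]$, $[a_q^*a_q, A_\eta]$, $[a_q^*a_q, B_\tau]$ are controlled by the same schematic estimates with the single fixed momentum $q$ contributing an extra bounded factor.
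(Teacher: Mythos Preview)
Your proposal follows the same three-step scheme as the paper: conjugate by the generalized Bogoliubov map, then by the cubic map, then by the second Bogoliubov map, with the $A_\eta$ step contributing the dominant $N^{-1/2}$ error and the two quadratic steps combining to $e^{K_\nu}$ via $|\eta_p+\tau_p-\nu_p|\leq CN^{-1}$. The paper implements the Bogoliubov steps slightly differently: rather than comparing $e^{B_\eta}$ with $e^{K_\eta}$ at the level of the unitaries via a Duhamel argument, it first rewrites $\cN_+=\sum_p b_p^*b_p + O(N^{-1}(\cN_++1)^2)$ and then invokes the explicit formula $e^{-B_\eta}b_p e^{B_\eta}=\cosh(\eta_p)b_p+\sinh(\eta_p)b_{-p}^*+d_p$ from \cite[Lemma~2.3]{BBCS3}, working entirely with $b$-operators until the very last step where $b$'s are traded for $a$'s. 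This sidesteps the minor awkwardness that $e^{K_\eta}$ does not preserve $\cF_+^{\leq N}$ and packages the $b\leftrightarrow a$ comparison into the already-established bounds on the remainder $d_p$; your route is workable but effectively reproves part of that lemma.

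One point you should tighten: in your second step the operator being conjugated by $e^{A_\eta}$ is not $\cN_+$ itself but the already $B_\eta$-conjugated quadratic form $\sum_p(\cosh(\eta_p)b_p^*+\sinh(\eta_p)b_{-p})(\cosh(\eta_p)b_p+\sinh(\eta_p)b_{-p}^*)$. The paper computes the commutator of $A_\eta$ with \emph{this} expression (not with $\cN_+$), and the same $CN^{-1/2}(\cN_++1)^{3/2}$ bound holds by Cauchy--Schwarz since the commutator retains the cubic-in-$b^\sharp$ structure with the $N^{-1/2}$ prefactor. This is a bookkeeping correction, not a gap.
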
 

\begin{remark*}
Recall that the action of $ e^{K_{\nu}}$ on creation and annihilation operators is explicit with
        \[\begin{split}
        e^{-K_{\nu}} a_p\, e^{K_{\nu}} & =\cosh(\nu_p) a_p + \sinh(\nu_p)a^*_{-p},\\
        e^{-K_{\nu}} a^*_p \,e^{K_{\nu}} & =\cosh(\nu_p) a^*_p + \sinh(\nu_p)a_{-p}.
        \end{split}\]
\end{remark*}
\vspace{0.2cm}
\begin{proof}
We write
        \[
        \sum_{p\in\Lambda_{+}^{*}}b_{p}^{*}b_{p}=\sum_{p\in\Lambda_{+}^{*}}a_{p}^{*}a_{p}-\frac{1}{N}\sum_{p\in\Lambda_{+}^{*}}a_{p}^{*}\,\cN_{+}a_{p}.
        \]
so that
        \[
        \cN_{+}=\sum_{p\in\Lambda_{+}^{*}}a_{p}^{*}a_{p}=\sum_{p\in\Lambda_{+}^{*}}b_{p}^{*}b_{p}+\cE_{b}
        \]
for some error that satisfies $\pm\cE_{b}\leq (\cN_{+}+1)^{2}/N$. From \cite[Eq. (2.18) and Lemma 2.3]{BBCS3}, we get
        \begin{align*}
        e^{-B_{\eta}}\cN_{+}e^{B_{\eta}} 
         & =\sum_{p\in\Lambda_{+}^{*}}(\cosh(\eta_{p})b_{p}^{*}+\sinh(\eta_{p})b_{p}+d_{p}^{*})(\cosh(\eta_{p})b_{p}+\sinh(\eta_{p})b_{p}^{*}+d_{p})+e^{-B_{\eta}}\cE_{b}e^{B_{\eta}},
        \end{align*}
where, for every $n\in \bZ$, we have
        \begin{align*}
        \|(\cN_{+}+1)^{n/2}d_{p}\xi\| & \leq\frac{C}{N}\left(|\eta_{p}|\|(\cN_{+}+1)^{(n+3)/2}\xi\|+\|b_{p}(\cN_{+}+1)^{(n+2)/2}\xi\|\right)\\
        \|(\cN_{+}+1)^{n/2}d_{p}^{*}\xi\| & \leq\frac{C}{N}\|(\cN_{+}+1)^{(n+3)/2}\xi\|
        \end{align*}
for some constant $C=C_n>0$. Using these estimates, a straightforward application of Cauchy-Schwarz and Lemma \ref{lm:cNkbnds} imply that
        \[
        e^{-B_{\eta}}\cN_{+}e^{B_{\eta}}=\sum_{p\in\Lambda_{+}^{*}}(\cosh(\eta_{p})b_{p}^{*}+\sinh(\eta_{p})b_{p})(\cosh(\eta_{p})b_{p}+\sinh(\eta_{p})b_{p}^{*})+\cE_{B_{\eta}}, 
        \]
for some error $\pm \cE_{B_{\eta}} \leq C (\cN_{+}+1)^{2}/N$.

With regards to the action of $A_\eta$, it is enough to expand 
        \begin{align*}
    	& e^{-A_{\eta}}e^{-B_{\eta}}\cN_+ e^{B_{\eta}}e^{A_{\eta}} = e^{-A_{\eta}}e^{-B_{\eta}}\bigg(\sum_{p\in\Lambda_{+}^{*}}a_{p}^{*}a_{p}\bigg)e^{B_{\eta}}e^{A_{\eta}}\\
    	& =\sum_{p\in\Lambda_{+}^{*}}(\cosh\eta_{p}b_{p}^{*}+\sinh\eta_{p}b_{p})(\cosh\eta_{p}b_{p}+\sinh\eta_{p}b_{p}^{*})\\
    	& \quad+N^{-1/2}\int_0^1\mathrm{d}s\,\sum_{p\in \Lambda_+^*}e^{-sA_{\eta}}\bigg([(\cosh(\eta_{p})b_{p}^{*}+\sinh(\eta_{p})b_{p})(\cosh(\eta_{p})b_{p}+\sinh(\eta_{p})b_{p}^{*}), A_\eta \big]\bigg)e^{sA_{\eta}}\\
     &\hspace{0.5cm}+e^{-A_{\eta}}\cE_{B_{\eta}}e^{A_{\eta}}
        \end{align*}
and to verify with Cauchy-Schwarz the simple bound 
        \begin{align*}
    	& \pm \sum_{p\in \Lambda_+^*}  \big[(\cosh(\eta_{p})b_{p}^{*}+\sinh(\eta_{p})b_{p})(\cosh(\eta_{p})b_{p}+\sinh(\eta_{p})b_{p}^{*}), A_\eta \big] \leq C N^{-1/2} (\cN_{+}+1)^{3/2}.
        \end{align*}
Applying once more Lemma \ref{lm:cNkbnds}, we get 
        \[
        e^{-A_{\eta}}e^{-B_{\eta}}\cN_+ e^{B_{\eta}}e^{A_{\eta}}=\sum_{p\in\Lambda_{+}^{*}}(\cosh(\eta_{p})b_{p}^{*}+\sinh(\eta_{p})b_{p})(\cosh(\eta_{p})b_{p}+\sinh(\eta_{p})b_{p}^{*})+\cE_{A_\eta},
        \]
where $\pm \cE_{A_\eta}\leq CN^{-1/2} (\cN_{+}+1)^{3/2} $.

Finally, arguing as in e.g. \cite[Section 5]{BBCS3} or \cite[Section 5]{HST}, we find that 
        \begin{align*}
        &e^{-B_{\tau}}e^{-A_{\eta}}e^{-B_{\eta}}\cN_{+}e^{B_{\eta}}e^{A_{\eta}}e^{B_{\tau}}\\
        & =\sum_{p\in\Lambda_{+}^{*}}(\cosh(\eta_{p}+\tau_{p})b_{p}^{*}+\sinh(\eta_{p}+\tau_{p})b_{p})(\cosh(\eta_{p}+\tau_{p})b_{p}+\sinh(\eta_{p}+\tau_{p})b_{p}^{*})+\cE_{B_{\tau}}'\\
         & =\sum_{p\in\Lambda_{+}^{*}}(\cosh(\eta_{p}+\tau_{p})a_{p}^{*}+\sinh(\eta_{p}+\tau_{p})a_{p})(\cosh(\eta_{p}+\tau_{p})a_{p}+\sinh(\eta_{p}+\tau_{p})a_{p}^{*})+\cE_{B_{\tau}}
        \end{align*}
for errors $\pm\cE_{B_{\tau}}', \pm\cE_{B_{\tau}}\leq  CN^{-1/2}(\cN_{+}+1)^{3/2} $. Because $|\eta_{p}+\tau_{p}-\nu_{p}|\leq CN^{-1}$ (see e.g. \cite[Section 3]{RS}) uniformly in $p\in\Lambda_+^*$, we conclude that
        \[
        e^{-B_{\tau}}e^{-A_{\eta}}e^{-B_{\eta}}\cN_{+}e^{B_{\eta}}e^{A_{\eta}}e^{B_{\tau}}=e^{-K_{\nu}}\cN_{+}e^{K_{\nu}}+\cE
        \]
for an error $\pm\cE \leq  CN^{-1/2}(\cN_{+}+1)^{3/2} $. The proof w.r.t. the conjugation of $ a^*_q a_q$ is analogous. 
\end{proof}


\section{One-body density matrix}\label{sec:1pdm}
In this section, we derive the second order approximation \eqref{eq:1pd} in Theorem \ref{thm:main} for the one-particle density matrix $\rho_N^{(1)}$ defined in \eqref{eq:def-reduced-denstity-matricies}. We start with a few preliminary remarks. By conservation of the total momentum $ P= \sum_{j=1}^N (-i \nabla_{x_j})$, we can find a joint eigenbasis of $H_N$ and $P$ so that $ \rho_N^{(1)}$ with operator kernel
    \[\rho_N^{(1)}(x;y) = \int_{\Lambda^{N-1}} dX \, \rho_N(x, X)\, \overline{\rho_N}(y,X)\]    
is translation invariant and only depends on $x-y$, for all $x,y\in\Lambda$. Here, we recall that $\rho_N = e^{-\beta H_N}/Z_N $ denotes the Gibbs state at inverse temperature $\beta>0$.

In the following, let us denote by $ N\rho^{(1)}$ the second order approximation of $N\rho_N^{(1)} $ in \eqref{eq:1pd}, namely
        \be\label{eq:lmpd1} N\rho^{(1)} = \Big(N- \sum_{p\in\Lambda_+^*}(\mu_p^2+\theta_p^2)\Big)|\ph_0\rangle\langle\ph_0| + \sum_{p\in\Lambda_+^*}(\mu_p^2+\theta_p^2)|\ph_p\rangle\langle\ph_p|\ee
with $(\mu_p^2)_{p\in\Lambda_+^*}$ and $(\theta_p^2)_{p\in\Lambda_+^*}$ defined in \eqref{eq:defkp}. Below, we will prove that for every $p\in\Lambda_+^*$, it holds true that  
        \be \label{eq:ws1pd} \lim_{N\to\infty} \big| N \tr |\ph_p\rangle\langle \ph_p| \rho_N^{(1)} -  N \tr |\ph_p\rangle\langle \ph_p| \rho^{(1)}\big| =0.  \ee
By translation invariance, i.e. $\tr |\ph_p\rangle\langle \ph_q|\,\rho_N^{(1)} = \tr |\ph_p\rangle\langle \ph_q|\, \rho^{(1)} =0$ for all $p\neq q$, this implies  
        \[ N Q_0 \rho_N^{(1)} Q_0 \stackrel{*}{\rightharpoonup} N Q_0\rho^{(1)}Q_0 = \sum_{p\in\Lambda_+^*}(\mu_p^2+\theta_p^2)|\ph_p\rangle\langle\ph_p|\]
as $N\to \infty$, in the trace class topology. With the same arguments, we prove below that 
        \be \label{eq:tr1pd} \lim_{N\to\infty} \big| N \tr Q_0 \rho_N^{(1)} Q_0-  N \tr Q_0\rho^{(1)}Q_0\big| =0\ee 
so that by positivity of $ Q_0 \rho_N^{(1)} Q_0$ and a standard result \cite[Theorem 2.20]{Simon}, we obtain that
        \[ \lim_{N\to\infty} \tr\big| N Q_0\rho_N^{(1)}Q_0 -N Q_0\rho^{(1)}Q_0 \big| =0.  \]
This implies the desired estimate \eqref{eq:1pd} by noting that
        \[\begin{split}
        N\rho_N^{(1)} = N P_0 \rho_N^{(1)} P_0 + N Q_0 \rho_N^{(1)} Q_0 &= N\big(\tr |\ph_0\rangle\ph_0|\rho_N^{(1)}\big)|\ph_0\rangle\langle\ph_0| +N Q_0 \rho_N^{(1)} Q_0\\
        & = \big(N- N\tr Q_0 \rho_N^{(1)} Q_0\big) |\ph_0\rangle\langle\ph_0| +NQ_0 \rho_N^{(1)} Q_0
        \end{split}\]
        where we have used translation-invariance again in the first equality.  
  
  To conclude \eqref{eq:1pd}, let us therefore focus on the proofs of \eqref{eq:ws1pd} and \eqref{eq:tr1pd}. Since the arguments are quite similar for both statements, we provide a detailed proof for \eqref{eq:tr1pd} only. The analogous steps for the proof of \eqref{eq:ws1pd} are briefly summarized at the end of this section. 

To show \eqref{eq:tr1pd}, we fix from now on $\zeta>0$ (below, we choose $\zeta>0$ large enough, depending on $\beta$, but independently of $N\in \bN$) and we decompose
        \be\label{eq:1pd1}\begin{split}
        \tr  N Q_0\rho_N^{(1)}Q_0 &=  \frac{\tr  \cN_+ \,e^{-\beta H_N}} {\tr e^{-\beta H_N}} \\
        & = \frac{\tr e^{-B_\eta} \cN_+e^{B_\eta} \textbf{1}_{[0,\zeta]}(\cG_N') e^{-\beta \cG_N'} + \tr e^{-B_\eta} \cN_+e^{B_\eta}\textbf{1}_{(\zeta,\infty)}(\cG_N') e^{-\beta \cG_N'}} {\tr \textbf{1}_{[0,\zeta]}(\cG_N') e^{-\beta \cG_N'} + \tr \textbf{1}_{(\zeta,\infty)}(\cG_N') e^{-\beta \cG_N'}}\\
        & = \frac{\tr \cU_{\eta,\tau} \cN_+  \cU_{\eta,\tau}^* \textbf{1}_{[0,\zeta]}(\cM_N') e^{-\beta \cM_N'} + \tr e^{-B_\eta} \cN_+e^{B_\eta}\textbf{1}_{(\zeta,\infty)}(\cG_N') e^{-\beta \cG_N'}} {\tr \textbf{1}_{[0,\zeta]}(\cM_N') e^{-\beta \cM_N'} + \tr \textbf{1}_{(\zeta,\infty)}(\cG_N') e^{-\beta \cG_N'}},
        \end{split}.\ee        
In this expression, we abbreviate $ \cU_{\eta,\tau} = e^{-B_\tau} e^{-A_\eta} e^{-B_{\eta} }$ and we used that $U_N \cN_+ U_N^* = \cN_+$ \cite{LNSS}, where $\cG_N'$ is defined as per equation \eqref{eq:GDelta}.
To estimate the traces over the spectral subspace in which $ \cG_N' > \zeta$, we use Prop. \ref{prop:GN} b) \& c), which imply that 
        \[\begin{split}
          \tr \textbf{1}_{(\zeta,\infty)}(\cG_N') e^{-\beta \cG_N'} = \sum_{j\in \bN: \lambda_j(\cG_N') >\zeta } e^{-\beta \lambda_j(\cG_N')}&\leq  e^{C\beta}\sum_{j\in \bN: \lambda_j(\cK) \geq \max\{\zeta/2, (N\zeta/2)^{2/7}\} } e^{-c\beta \lambda_j(\cK)}\\
          & \leq C \sum_{j\in \bN: \lambda_j(\cK) \geq \zeta/2  } e^{-c \beta  \lambda_j(\cK)}
        \end{split}\]
for sufficiently large $N$ and for suitable positive constants\footnote{Generic constants $C>0$, that are independent of $N$ and $\zeta$, may vary from line to line.} $c, C>0$. Bounding the trace on the r.h.s. in the previous line by the trace over the full excitation Fock space 
        \[\cF_+ =\bC \oplus \bigoplus_{n=1}^\infty \bigotimes_{\text{sym}}^n\big(\{\ph_0\}^\bot\big) \] 
and recalling the spectrum of $ \cK$ from \eqref{eq:evK}, a standard moment bound yields for $\mu>0$ that
        \be\label{eq:mombnd}\begin{split}
            \sum_{j\in \bN: \lambda_j(\cK) \geq \zeta/2  } e^{-c \beta  \lambda_j(\cK)}\leq \prod_{p\in\Lambda_+^*} \sum_{n_p=0}^\infty  e^{-c\mu\zeta/2-c(\beta -\mu) n_p |p|^2} =  e^{-c\mu\, \zeta/2}\prod_{p\in\Lambda_+^* } \frac{1}{1- e^{-c(\beta -\mu)|p|^2 }}.  \end{split}\ee
In particular, choosing $\mu = \beta/2$ and using that 
        \[ -\sum_{p\in \Lambda_+^*}\log \big(1- e^{-c\beta|p|^2/2}\big) = \sum_{p\in\Lambda_+^*}\sum_{n= 1}^\infty n^{-1} e^{-nc\beta|p|^2/2} \leq C,
        \]
we conclude that for suitable constants $c, C>0$ (independent of $N$ and $\zeta$),   
        \be \label{eq:tr>} \tr \textbf{1}_{(\zeta,\infty)}(\cG_N') e^{-\beta \cG_N'} \leq C e^{-c\beta\zeta/4 }= O(e^{-c\beta\zeta/4}). \ee
To bound the error term in the numerator on the r.h.s. in \eqref{eq:1pd1}, we proceed similarly and use the cyclicity of the trace, Lemma \ref{lm:cNkbnds} as well as Prop. \ref{prop:GN} b) to bound 
        \[ \tr e^{-B_\eta} \cN_+e^{B_\eta}\textbf{1}_{(\zeta,\infty)}(\cG_N') e^{-\beta \cG_N'} \leq C\, \tr (\cG_N'+1) \textbf{1}_{(\zeta,\infty)}(\cG_N') e^{-\beta \cG_N'}.  \]
If we assume without loss of generality that $\zeta > 2\beta^{-1} $, so that $x\mapsto (x+1)  e^{-\beta x} $ is monotonically decreasing in $[\zeta/2,\infty)$, we obtain similarly as above the upper bound 
        \[\begin{split}
            \tr (\cG_N'+1) \textbf{1}_{(\zeta,\infty)}(\cG_N') e^{-\beta \cG_N'} &\leq C\, \text{tr}_{\cF_+}\,  \cK \,\textbf{1}_{[\zeta/2,\infty)}(\cK)  e^{-c \beta \cK} \leq C e^{-c\beta \zeta/4}\, \text{tr}_{\cF_+}\, \cK    \,e^{-c \beta \cK/2}.
        \end{split}\]
Combining this with the upper bound
        \[\begin{split}\text{tr}_{\cF_+}\, \cK\,    e^{-c \beta \cK/2} &= \text{tr}_{\cF_+}\,   e^{-c \beta \cK/2}\Big(-\frac2{c\beta}\partial_{\beta}\log \text{tr}_{\cF_+}\,   e^{-c \beta \cK/2}\Big) \\
        &\leq C \Big(\frac2{c\beta}\partial_{\beta}\sum_{p\in \Lambda_+^*}\log \big(1- e^{-c\beta|p|^2/2}\big)\Big) = C \sum_{p\in \Lambda_+^*} \frac{|p|^2 } {  e^{c\beta|p|^2/2}-1 } \leq C, 
        \end{split}\]
we conclude that $\tr e^{-B_\eta} \cN_+e^{B_\eta}\textbf{1}_{(\zeta,\infty)}(\cG_N') e^{-\beta \cG_N'} = O(e^{-c\beta \zeta/4})$ and thus 
        \be\label{eq:1pd2}\begin{split}
        \tr  N Q_0\rho_N^{(1)}Q_0  
        & = \frac{\tr \cU_{\eta,\tau} \cN_+  \cU_{\eta,\tau}^* \textbf{1}_{[0,\zeta]}(\cM_N') e^{-\beta \cM_N'} + O(e^{-c\beta \zeta})} {\tr \textbf{1}_{[0,\zeta]}(\cM_N') e^{-\beta \cM_N'} + O(e^{-c\beta \zeta})}. 
        \end{split}\ee

In the next step, we replace $\cM_N'$ in the traces over the low-energy subspaces in which $\cM_N'\leq \zeta$ by $\cD$. Here, we argue similarly as above and we use additionally the results of Prop. \ref{prop:MN}. Indeed, applying Prop. \ref{prop:MN} c), we obtain first of all that
        \[\tr \textbf{1}_{[0,\zeta]}(\cM_N') e^{-\beta \cM_N'} = \big(1+O(N^{-1/4})\big)\,\tr \textbf{1}_{[0,\zeta]}(\cD) e^{-\beta \cD} \]
for $N$ large enough. Then, using the explicit form \eqref{eq:evD} of the spectrum of $\cD$, we find as in \eqref{eq:mombnd} that 
        \[ \tr \textbf{1}_{[0,\zeta]}(\cD) e^{-\beta \cD} = \tr  e^{-\beta \cD} + O(e^{-\beta \zeta/2}). \]
Similarly, we obtain that 
        \[\begin{split}
           0\leq  \text{tr}_{\cF_+}  e^{-\beta \cD}- \tr e^{-\beta \cD} &=    \sum_{\substack{(n_p)_{p\in\Lambda_+^*}\in(\bN_0)^{\Lambda_+^*}:\\ \sum_{p\in\Lambda_+^*} n_p >N}} e^{-\beta \sum_{p\in\Lambda_+^*}n_p \epsilon_p }  \\
           &\leq e^{-\beta \mu N} \sum_{(n_p)_{p\in\Lambda_+^*}\in(\bN_0)^{\Lambda_+^*} } \prod_{p\in\Lambda_+^*} e^{-\beta (\epsilon_p-\mu) n_p  },
        \end{split}\]
so that, choosing e.g. $0< \mu \leq 4\pi^2 <\inf_{p\in\Lambda_+^*}\epsilon_p$, we conclude 
        \be\label{eq:denexp}\tr \textbf{1}_{[0,\zeta]}(\cM_N') e^{-\beta \cM_N'} = \text{tr}_{\cF_+}   e^{-\beta \cD} + O(e^{-\beta\zeta/2})\ee
for $N$ large enough. We combine analogous arguments with Proposition \ref{prop:MN} c) \& d) and Lemma \ref{lm:conj1} to get  
        \be\label{eq:numexp}\begin{split}
         \tr \cU_{\eta,\tau} \cN_+  \cU_{\eta,\tau}^* \textbf{1}_{[0,\zeta]}(\cM_N') e^{-\beta \cM_N'}& = \tr e^{-K_{\nu}} \cN_+ e^{K_{\nu}} \textbf{1}_{[0,\zeta]}(\cM_N') e^{-\beta \cM_N'} +O\big(N^{-1/4} (1+\zeta^3)\big)  \\
         & = \tr_{\cF_+} e^{-K_{\nu}} \cN_+ e^{K_{\nu}}   e^{-\beta \cD} + O(e^{-\beta \zeta/2})
        \end{split}\ee
for $N$ large enough. Indeed, to obtain the second line, notice that Prop. \ref{prop:MN} d) and induction imply 
        \[ \max_{j\in \bN: \widetilde{\lambda}_j(\cM_N') < \zeta }\| \textbf{1}_{\{\widetilde{\lambda}_j\}}(\cM_N') - \textbf{1}_{\{\widetilde{\lambda}_j\} }(\cD)  \|_{\text{HS}}^2 = \max_{j\in \bN: \widetilde{\lambda}_j(\cD) < \zeta }\| \textbf{1}_{\{\widetilde{\lambda}_j\}}(\cM_N') - \textbf{1}_{\{\widetilde{\lambda}_j\} }(\cD)  \|_{\text{HS}}^2\leq \frac{C_\zeta}{N^{1/4}},  \]
where $ (\widetilde\lambda_j(\cM_N'))$ and $ (\widetilde\lambda_j(\cD))$ denote the eigenvalues of $\cM_N'$ and respectively $\cD$, counted without multiplicity, and where $ C_\zeta$ denotes some constant that depends on $\zeta$, but that is independent of $N$. Using the previous bound, Prop. \ref{prop:MN} c) and the decompositions 
        \[\begin{split}
            \textbf{1}_{[0,\zeta)}(\cM_N') e^{-\beta \cM_N'}& = \sum_{j\in \bN: \widetilde{\lambda}_j(\cM_N') < \zeta} e^{-\beta \widetilde \lambda_j(\cM_N')} \textbf{1}_{\{\widetilde{\lambda}_j\}}(\cM_N'),\\
            \textbf{1}_{[0,\zeta)}(\cD) e^{-\beta \cD}& = \sum_{j\in \bN: \widetilde{\lambda}_j(\cD) < \zeta} e^{-\beta \widetilde \lambda_j(\cD)} \textbf{1}_{\{\widetilde{\lambda}_j\}}(\cD),
        \end{split} \]
we then obtain that 
        \[\big\|  \textbf{1}_{[0,\zeta)}(\cM_N') e^{-\beta \cM_N'}-  \textbf{1}_{[0,\zeta)}(\cD) e^{-\beta \cD}\big\|_{\text{HS}}^2 \leq \frac{C_\zeta}{N^{1/4}}.\]
Combined with Prop. \ref{prop:MN} b), Lemma \ref{lm:conj1} and the fact that $\cN_+^2\leq \cD^2$, this implies  
        \[\begin{split}
       & \big| \tr  e^{-K_{\nu}} \cN_+ e^{K_{\nu}} \textbf{1}_{[0,\zeta]}(\cM_N') e^{-\beta \cM_N'} - \tr e^{-K_{\nu}} \cN_+ e^{K_{\nu}} \textbf{1}_{[0,\zeta]}(\cD) e^{-\beta \cD} \big| \\
       &\leq C\Big( \tr (\cN_++1)^2  \textbf{1}_{[0,\zeta)}(\cM_N') \Big)^{1/2}\big\|\textbf{1}_{[0,\zeta)}(\cM_N') e^{-\beta \cM_N'}-  \textbf{1}_{[0,\zeta)}(\cD) e^{-\beta \cD}\big\|_{\text{HS}} \\
       & \hspace{0.5cm} + \big| \tr e^{-K_{\nu}} \cN_+ e^{K_{\nu}} \textbf{1}_{[\zeta,\infty)}(\cM_N') \textbf{1}_{[0,\zeta)}(\cD) e^{-\beta \cD}\big| \\
       &\leq \frac{C_\zeta}{N^{1/8}} = O(  e^{-\beta \zeta/2})
        \end{split}\]
for $N$ large enough, where in the last step we used that 
        \[ \begin{split}
            &\big| \tr e^{-K_{\nu}} \cN_+ e^{K_{\nu}} \textbf{1}_{[\zeta,\infty)}(\cM_N') \textbf{1}_{[0,\zeta)}(\cD) e^{-\beta \cD}\big|\\
           & = \big| \tr e^{-K_{\nu}} \cN_+ e^{K_{\nu}}\big(  \textbf{1}_{[0,\zeta)}(\cD) -\textbf{1}_{[0,\zeta)}(\cM_N')\big) \textbf{1}_{[0,\zeta)}(\cD) e^{-\beta \cD}\big|\\
           &\leq C_\zeta \big\|\textbf{1}_{[0,\zeta)}(\cM_N')  -  \textbf{1}_{[0,\zeta)}(\cD)  \big\|_{\text{HS}}\leq \frac{C_\zeta}{N^{1/8}}.
        \end{split}\]
Together with this input, similar arguments as in the proof of \eqref{eq:denexp} imply the identity \eqref{eq:numexp}. 

Collecting the above observations and inserting them into \eqref{eq:1pd2}, we have shown that
         \be\label{eq:1pd3}\begin{split}
        \tr  N Q_0\rho_N^{(1)}Q_0  
        & = \frac{\text{tr}_{\cF_+} e^{-K_{\nu}} \cN_+ e^{K_{\nu}}  e^{-\beta \cD} + O(e^{-c\beta \zeta})} {\text{tr}_{\cF_+}   e^{-\beta \cD} + O(e^{-c\beta \zeta})} \\ &=\frac{\text{tr}_{\cF_+} e^{-K_{\nu}} \cN_+ e^{K_{\nu}}  e^{-\beta \cD} } {\text{tr}_{\cF_+}   e^{-\beta \cD} } + O(e^{-c\beta \zeta}). 
        \end{split}\ee
Finally, recalling that an explicit eigenbasis of $\cD$ consists of vectors of the form \eqref{eq:xij2}, which are simultaneous eigenvectors of both $\cD$ and the number of particles operator $\cN_+$, we compute
        \[\begin{split}
          \frac{\text{tr}_{\cF_+} e^{-K_{\nu}} \cN_+ e^{K_{\nu}}  e^{-\beta \cD} } {\text{tr}_{\cF_+}   e^{-\beta \cD} }& =  \sum_{p\in\Lambda_+^*} \sinh^2(\nu_p) + \sum_{p\in\Lambda_+^*}\big(\sinh^2(\nu_p)+ \cosh^2(\nu_p)\big)\frac{\text{tr}_{\cF_+}   a^*_pa_p    e^{-\beta \cD} } {\text{tr}_{\cF_+}   e^{-\beta \cD} }\\
        &=  \sum_{p\in\Lambda_+^*} \sinh^2(\nu_p) - \frac1\beta\sum_{p\in\Lambda_+^*} \cosh(2\nu_p)   \partial_{\omega_p}\Big(\log  \text{tr}_{\cF_+}   e^{-\beta (\cD+\omega_p a^*_pa_p )}\Big)_{|\omega_p=0} \\
        &=  \sum_{p\in\Lambda_+^*} \sinh^2(\nu_p)  + \sum_{p\in\Lambda_+^*} \cosh(2\nu_p) \frac{\epsilon_p}{e^{\beta\epsilon_p}-1 }
        \end{split}\] 
with 
        \[\begin{split}
            \sinh(\nu_p) &=   \frac{|p|^2+8\pi \mathfrak{a} - \sqrt{|p|^4+16\pi \mathfrak{a}|p|^2}}{2\sqrt{|p|^4+16\pi \mathfrak{a}|p|^2}},\;\;\cosh(2\nu_p) = \frac{|p|^{4} +8\pi \mathfrak{a}|p|^{2}}{|p|^{2}\sqrt{|p|^{4}+16\pi \mathfrak{a}|p|^{2}}}.
        \end{split}\]
Comparing this with $\rho^{(1)}$ in \eqref{eq:lmpd1}, the identity \eqref{eq:1pd3} shows that for every $\zeta >2 \beta^{-1}$, we have that
        \[\limsup_{N\to\infty}\big| \tr  N Q_0\rho_N^{(1)}Q_0 - N\tr \rho^{(1)} \big|\leq C e^{-c\beta\zeta} \]
for suitable $c, C>0$ that are independent of $N$ and $\zeta$. Since $\zeta>2\beta^{-1}$ is arbitrary, this proves \eqref{eq:tr1pd}. 

The proof of \eqref{eq:ws1pd} follows along the same lines: in this case we prove for every fixed $p\in\Lambda_+^*$ that
        \begin{align*}
        \tr  N |\ph_p\rangle\langle\ph_p|\rho_N^{(1)} = \frac{\tr a^*_pa_p e^{-\beta H_N}}{\tr e^{-\beta H_N}} &=  \frac{\text{tr}_{\cF_+} e^{-K_{\nu}} a^*_pa_p e^{K_{\nu}}  e^{-\beta \cD} } {\text{tr}_{\cF_+}   e^{-\beta \cD} } + O(e^{-c\beta\zeta}) \\
        &= N\tr   |\ph_p\rangle\langle\ph_p|\rho^{(1)}+O(e^{-c\beta\zeta})
        \end{align*}
and then conclude \eqref{eq:ws1pd} as above. Together with the preliminary remarks at the beginning of this section, this proves the estimate \eqref{eq:1pd} in Theorem \ref{thm:main}. 

 
\section{Two-body density matrix}\label{sec:2pdm}
 
In this section, we derive the second order approximation \eqref{eq:2pd}  in Theorem \ref{thm:main} for the two-particle density matrix $\rho_N^{(2)}$ given in \eqref{eq:def-reduced-denstity-matricies}. 
%
%
In this section, following the proof in Section \ref{sec:1pdm}, we prove the analogous result \eqref{eq:2pd} on the two-particle reduced density matrix.
We use similar arguments as in the derivation of \eqref{eq:1pd}, but deriving \eqref{eq:2pd} is technically slightly more involved, because $\rho_N^{(2)}$ is a two particle operator. Let us outline the overall strategy. Based on the identity
        \[\text{id}_{L^2(\Lambda^2)} = P_0\otimes P_0 + P_0\otimes Q_0+Q_0\otimes P_0+Q_0\otimes Q_0, \]
we split $ \rho_N^{(2)}$ into $\rho_N^{(2)} = \rho^{(2,0)}_{N} + \rho^{(2,2)}_{N} +  \rho^{(2,3)}_{N}+\rho^{(2,4)}_{N} $, where
        \be \begin{split}
        \rho^{(2,0)}_{N}&:= P_0\otimes P_0 \, \rho_N^{(2)}\, P_0\otimes P_0, \\
        \rho^{(2,2)}_{N}&:= 4 P_0\otimes Q_0 \, \rho_N^{(2)}\, P_0\otimes Q_0 + P_0\otimes P_0 \, \rho_N^{(2)}\, Q_0\otimes Q_0 + Q_0\otimes Q_0 \, \rho_N^{(2)}\, P_0\otimes P_0, \\
        \rho^{(2,3)}_{N}&:= 2 P_0\otimes Q_0 \, \rho_N^{(2)}\, Q_0\otimes Q_0 + 2  Q_0\otimes Q_0 \, \rho_N^{(2)}\, P_0\otimes Q_0  ,\\
        \rho^{(2,4)}_{N}&:=  Q_0\otimes Q_0 \, \rho_N^{(2)}\, Q_0\otimes Q_0 .
        \end{split}\ee
Notice that we used the particle exchange symmetry in $L^2_s(\Lambda^2)$ as well as translation invariance, i.e. 
        \[ \tr |\ph_p\otimes \ph_q\rangle\langle \ph_r\otimes \ph_s| \rho_N^{(2)} \equiv 0 \]
whenever $ p+q\neq r+s$, for $p,q,r,s \in \Lambda^* = 2\pi \bZ^3$. In view of the asymptotics \eqref{eq:2pd}, we denote by $\rho^{(2)}$ the second order approximation of $\rho_N^{(2)}$ and we split it accordingly into $\rho^{(2)} =  \rho^{(2,0)} + \rho^{(2,2)}$,
where
        \[ \begin{split}
        \rho^{(2,0)} &= \Big( N - 4\sum_{p\in\Lambda_+^*}\big(\mu_p^2+\theta_p^2\big)  \Big)|\ph_0\otimes\ph_0\rangle\langle\ph_0\otimes\ph_0|, \\
        \rho^{(2,2)} &= \frac4N\!\sum_{p\in\Lambda_+^*} \!\big(\mu_p^2+\theta_p^2\big) |\ph_0\otimes\ph_p\rangle\langle\ph_0\otimes\ph_p| \\
        &\qquad - \frac{4\pi\mathfrak{a}}N\sum_{p\in\Lambda_+^*} \!\!  \bigg(\frac{1}{\epsilon_p} + \frac{2}{e^{\beta\epsilon_p}-1}\bigg) \!\Big(|\ph_0\otimes\ph_0\rangle\langle\ph_p\otimes\ph_{-p}| +\text{h.c.}\Big) , \\
        \end{split}\]
and where we recall that 
        \[\epsilon_p^2= |p|^4 + 16\pi \mathfrak{a}|p|^2,\; \mu_p^2= \frac{|p|^2+8\pi \mathfrak{a} - \epsilon_p}{2\epsilon_p},\;
        \theta_{p}^2=\frac{|p|^{2}+8\pi\mathfrak{a}}{e^{\beta\epsilon_{p}}-1}.  \]
Given the above conventions, the desired estimate \eqref{eq:2pd} follows from the next lemma. 

\begin{lemma} \label{lem:last}Under the same assumptions as in Theorem \ref{thm:main}, the following holds true: 
\begin{enumerate}[a)]
\item  We have that 
        \[ \lim_{N\to\infty} \emph{tr}\big| N \rho^{(2,4)}_{N} \big| = 0.  \]
\item We have that $0\leq  \emph{tr}\, P_0\otimes Q_0 \,N\rho^{(2)} P_0\otimes Q_0 \leq C  $ uniformly in $N$ and that
        \[ \lim_{N\to\infty} \emph{tr}  \big| N \rho_N^{(2,2)} - N\rho^{(2,2)} \big| = 0.   \]
\item As a consequence of $a)$ and $b)$, it follows that
        \[\lim_{N\to\infty} \emph{tr} \big| N \rho^{(2,3)}_{N} \big|  =\lim_{N\to\infty} \emph{tr}  \big| N \rho_N^{(2,0)} - N\rho^{(2,0)} \big| = 0.  \]
In particular, this implies the second order approximation \eqref{eq:2pd} which is equivalent to
        \[\lim_{N\to\infty} \emph{tr}  \big| N \rho_N^{(2)} - N \rho^{(2)}\big|=0. \]
\end{enumerate}
\end{lemma}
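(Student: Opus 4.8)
The plan is to adapt the scheme of Section~\ref{sec:1pdm} to (at most quartic) polynomials in creation and annihilation operators and then deduce the three assertions in the stated order, using positivity of $\rho_N^{(2)}$ to glue the pieces together. Via $\langle\ph_i\otimes\ph_j,\rho_N^{(2)}\ph_k\otimes\ph_l\rangle=\tfrac1{N(N-1)}\langle a_i^*a_j^*a_la_k\rangle_{\rho_N}$ and momentum conservation, each block of $N\rho_N^{(2)}$ becomes a Gibbs expectation of an explicit operator: $N\rho^{(2,4)}_N$ to $\tfrac1{N-1}\cN_+(\cN_+-1)$; the diagonal $P_0\otimes Q_0$ part of $N\rho^{(2,2)}_N$ to $\tfrac1{N-1}(N-\cN_+)a_p^*a_p$; its anomalous part to $\tfrac1{N-1}a_0^{*2}a_pa_{-p}$; and $N\rho^{(2,0)}_N$ to $\tfrac1{N-1}(N-\cN_+)(N-\cN_+-1)$. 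For each such expectation I would: (i) conjugate by $U_N$ and then by $e^{B_\eta},e^{A_\eta},e^{B_\tau}$ to pass to the Gibbs state of $\cM_N'$; (ii) split the trace at a fixed level $\zeta$ (large, depending on $\beta$), bounding the $\{\cM_N'>\zeta\}$ contribution by $O(e^{-c\beta\zeta})$ with Prop.~\ref{prop:GN}\,b),c) and Prop.~\ref{prop:MN}\,b),c), exactly as in Section~\ref{sec:1pdm}; (iii) on $\operatorname{ran}\mathbf 1_{[0,\zeta]}(\cM_N')$, replace $\cM_N'$ by $\cD$ using the Hilbert--Schmidt bounds of Prop.~\ref{prop:MN}\,d); and (iv) replace the conjugated observable by its limiting form, obtained from an extension of Lemma~\ref{lm:conj1}, and evaluate the resulting quasi-free expectation against $e^{-\beta\cD}/\mathrm{tr}_{\cF_+}e^{-\beta\cD}$ by Wick's theorem and the explicit form of $K_\nu$. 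Running (ii)--(iii) for $\cN_+$ and $\cN_+^2$ also yields the a priori bounds $\langle\cN_+\rangle_{\rho_N},\langle\cN_+^2\rangle_{\rho_N}=O(1)$ (using Prop.~\ref{prop:MN}\,b) and \eqref{eq:denexp}), and (i)--(iv) for $a_p^*a_p$ is precisely the content of Section~\ref{sec:1pdm}, giving $\langle a_p^*a_p\rangle_{\rho_N}\to\mu_p^2+\theta_p^2$ with $\sum_p\langle a_p^*a_p\rangle_{\rho_N}\to\sum_p(\mu_p^2+\theta_p^2)$.

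For part a), since $\rho_N^{(2)}\ge0$ the compression $\rho^{(2,4)}_N=Q_0\otimes Q_0\,\rho_N^{(2)}\,Q_0\otimes Q_0$ is non-negative, so $\mathrm{tr}|N\rho^{(2,4)}_N|=\mathrm{tr}(N\rho^{(2,4)}_N)=\tfrac1{N-1}(\langle\cN_+^2\rangle_{\rho_N}-\langle\cN_+\rangle_{\rho_N})\to0$ by the a priori bound.

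Part b) is where the work concentrates. The uniform bound $0\le\mathrm{tr}\,P_0\otimes Q_0\,N\rho_N^{(2)}\,P_0\otimes Q_0\le C$ is immediate from positivity and $\mathrm{tr}\,P_0\otimes Q_0\,N\rho_N^{(2)}\,P_0\otimes Q_0=\tfrac1{N-1}\langle(N-\cN_+)\cN_+\rangle_{\rho_N}\le\tfrac{N}{N-1}\langle\cN_+\rangle_{\rho_N}=O(1)$. For the trace-norm convergence I would treat the two ``shapes'' of $N\rho^{(2,2)}_N$ separately. The diagonal $P_0\otimes Q_0$ part has non-negative coefficients $\propto\tfrac1{N-1}\langle(N-\cN_+)a_p^*a_p\rangle_{\rho_N}$ on $|\ph_0\otimes\ph_p\rangle\langle\ph_0\otimes\ph_p|$; writing $\langle(N-\cN_+)a_p^*a_p\rangle_{\rho_N}=N\langle a_p^*a_p\rangle_{\rho_N}-\langle\cN_+a_p^*a_p\rangle_{\rho_N}$ with $0\le\sum_p\langle\cN_+a_p^*a_p\rangle_{\rho_N}=\langle\cN_+^2\rangle_{\rho_N}=O(1)$, the trace-norm convergence of this part follows from positivity of each coefficient, convergence of the sum of coefficients, and \cite[Theorem 2.20]{Simon}, just as for $Q_0\rho_N^{(1)}Q_0$. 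The anomalous part $P_0\otimes P_0\,N\rho_N^{(2)}\,Q_0\otimes Q_0+\mathrm{h.c.}$ has range in $\bC(\ph_0\otimes\ph_0)$ and its adjoint, hence is a rank-$\le2$ operator $|\ph_0\otimes\ph_0\rangle\langle v_N|+|v_N\rangle\langle\ph_0\otimes\ph_0|$ with $v_N=\sum_{p\in\Lambda_+^*}\overline{a_p^{(N)}}\,\ph_p\otimes\ph_{-p}$, $a_p^{(N)}=\tfrac1{N-1}\langle a_0^{*2}a_pa_{-p}\rangle_{\rho_N}$ and $\langle v_N,\ph_0\otimes\ph_0\rangle=0$; its trace-norm distance to the target equals $2\|v_N-v_\infty\|_{L^2}=2\big(\sum_p|a_p^{(N)}-a_p^{(\infty)}|^2\big)^{1/2}$, which vanishes by dominated convergence once one has (1) the pointwise limit $a_p^{(N)}\to a_p^{(\infty)}$, equal to the stated coefficient, and (2) an $\ell^2$-summable $N$-uniform majorant, e.g. $|a_p^{(N)}|\le C|p|^{-4}$ (derivable from Prop.~\ref{prop:GN}--\ref{prop:MN} together with $|\eta_p|\le C|p|^{-2}$). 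Point (1) is the main obstacle: after $U_N\,(N^{-1}a_0^{*2}a_pa_{-p})\,U_N^*=a_pa_{-p}+\cE$ with $\pm\cE\le C\cN_+^2/N$ in form sense, one needs the analogue of Lemma~\ref{lm:conj1} for the pair operator, namely $\cU_{\eta,\tau}\,(a_pa_{-p})\,\cU_{\eta,\tau}^*=e^{-K_\nu}a_pa_{-p}e^{K_\nu}+\cE_p$ with $\cE_p$ of the same type as the errors there. This is proved by the identical chain of estimates (expansion of $e^{-B_\eta}(\cdot)e^{B_\eta}$ into $b$-operators with the $d_p$-remainders, the first-order Taylor expansion for $e^{-A_\eta}(\cdot)e^{A_\eta}$, Cauchy--Schwarz, Lemma~\ref{lm:cNkbnds}, \eqref{eq:taubnd} and $|\eta_p+\tau_p-\nu_p|\le CN^{-1}$); then steps (ii)--(iv) identify $a_p^{(\infty)}$ with $\langle e^{-K_\nu}a_pa_{-p}e^{K_\nu}\rangle_{e^{-\beta\cD}}$, which by Wick's theorem and the explicit $K_\nu$ is the coefficient in $\rho^{(2,2)}$.

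For part c), I would first bound $\rho^{(2,3)}_N$ by compressing $\rho_N^{(2)}\ge0$ to $(P_0\otimes Q_0)\oplus(Q_0\otimes Q_0)$, which gives
\[
\begin{pmatrix} P_0 Q_0\,\rho_N^{(2)}\,P_0 Q_0 & P_0 Q_0\,\rho_N^{(2)}\,Q_0 Q_0 \\ Q_0 Q_0\,\rho_N^{(2)}\,P_0 Q_0 & Q_0 Q_0\,\rho_N^{(2)}\,Q_0 Q_0 \end{pmatrix}\ge0 ,
\]
whence, writing the off-diagonal block as $A^{1/2}KC^{1/2}$ with $\|K\|\le1$, one gets $\mathrm{tr}|P_0 Q_0\,\rho_N^{(2)}\,Q_0 Q_0|\le(\mathrm{tr}\,P_0 Q_0\,\rho_N^{(2)}\,P_0 Q_0)^{1/2}(\mathrm{tr}\,Q_0 Q_0\,\rho_N^{(2)}\,Q_0 Q_0)^{1/2}$, hence
\[
\mathrm{tr}|N\rho^{(2,3)}_N|\le C\big(N\,\mathrm{tr}\,P_0 Q_0\,\rho_N^{(2)}\,P_0 Q_0\big)^{1/2}\big(N\,\mathrm{tr}\,Q_0 Q_0\,\rho_N^{(2)}\,Q_0 Q_0\big)^{1/2}\to0 ,
\]
since the first factor is $O(1)$ by part b) and the second is $\mathrm{tr}(N\rho^{(2,4)}_N)\to0$ by part a). Finally $N\rho^{(2,0)}_N=\mathrm{tr}(N\rho^{(2,0)}_N)\,|\ph_0\otimes\ph_0\rangle\langle\ph_0\otimes\ph_0|$ is rank one, so $\mathrm{tr}|N\rho^{(2,0)}_N-N\rho^{(2,0)}|=|\mathrm{tr}(N\rho^{(2,0)}_N)-(N-4\sum_p(\mu_p^2+\theta_p^2))|$; using $\mathrm{tr}(N\rho_N^{(2)})=N$, the vanishing of the traces of $\rho^{(2,3)}_N$ and of the anomalous part of $\rho^{(2,2)}_N$ (both off-diagonal in the $P_0/Q_0$ sectors), part a), and the convergence of $\mathrm{tr}(N\rho^{(2,2)}_N)$ from part b), this tends to $0$. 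Summing the four contributions gives $\mathrm{tr}|N\rho_N^{(2)}-N\rho^{(2)}|\to0$, i.e. \eqref{eq:2pd}. The only genuinely new input beyond Section~\ref{sec:1pdm} is the conjugation identity for the pair operator in part b) together with the accompanying $\ell^2$-summable a priori bound on $\langle a_0^{*2}a_pa_{-p}\rangle_{\rho_N}$; everything else is a direct, if lengthy, adaptation.
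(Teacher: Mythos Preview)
Your proposal follows essentially the same route as the paper: the same $P_0/Q_0$ block decomposition, positivity plus trace convergence (via \cite[Theorem~2.20]{Simon}) for the diagonal pieces, the rank-two structure $|\ph_0\otimes\ph_0\rangle\langle v_N|+\text{h.c.}$ for the anomalous part, and the Cauchy--Schwarz bound on the off-diagonal block for part~c).

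One correction and one simplification. Your proposed majorant $|a_p^{(N)}|\le C|p|^{-4}$ cannot hold, since already the limit $a_p^{(\infty)}=-4\pi\mathfrak a\big(\epsilon_p^{-1}+2(e^{\beta\epsilon_p}-1)^{-1}\big)\sim -4\pi\mathfrak a\,|p|^{-2}$ decays only like $|p|^{-2}$; the correct uniform bound is $|a_p^{(N)}|\le C|p|^{-2}$, which is still in $\ell^2(\Lambda_+^*)$ and suffices for dominated convergence. The paper sidesteps any pointwise decay estimate altogether: writing $a_p^{(N)}$ as the $\cG_N'$-Gibbs expectation of $e^{-B_\eta}b_p^*b_{-p}^*e^{B_\eta}$ and applying Cauchy--Schwarz in the trace gives $|a_p^{(N)}|^2\le C\langle\cN_+\rangle\cdot\langle e^{-B_\eta}a_p^*a_pe^{B_\eta}\rangle$, whose sum over $p$ is bounded by $C\langle\cN_+\rangle^2\le C$; combined with the $\ell^1$-convergence of $p\mapsto\langle a_p^*a_p\rangle_{\rho_N}$ already established in Section~\ref{sec:1pdm}, this yields the required tightness for $\|v_N-v_\infty\|_{\ell^2}\to 0$ without ever invoking $|\eta_p|\le C|p|^{-2}$.
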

\begin{proof}
We start with the proof of part $a)$. Since $  \rho_N^{(2)}\geq 0$, it is enough to show that
    \[ \lim_{N\to\infty} \tr N \rho_N^{(2)} = 0. \]
To this end, we proceed similarly as in Section \ref{sec:1pdm} and we write for fixed $\zeta >0$
        \begin{align} \label{eq:2pd1} 
        &N (N-1)\tr  \rho_N^{(2)}
        =\frac{\tr \cN_+(\cN_+-1) e^{-\beta H_N} }{\tr e^{-\beta H_N}  } \\
        & =\frac{\tr e^{-B_\eta} \cN_+(\cN_+-1) e^{B_\eta} \textbf{1}_{[0,\zeta]}(\cG_N') e^{-\beta \cG_N'} + \tr e^{-B_\eta} \cN_+(\cN_+-1)e^{B_\eta}\textbf{1}_{(\zeta,\infty)}(\cG_N') e^{-\beta \cG_N'}} {\tr \textbf{1}_{[0,\zeta]}(\cM_N') e^{-\beta \cM_N'} + \tr \textbf{1}_{(\zeta,\infty)}(\cG_N') e^{-\beta \cG_N'}}. \nonumber
        \end{align}    
 Now, by \eqref{eq:tr>} and \eqref{eq:denexp}, we have for every $N$ large enough (depending on $\zeta$) that 
        \[\begin{split}       
        \big| \tr \textbf{1}_{[0,\zeta]}(\cM_N') e^{-\beta \cM_N'} + \tr \textbf{1}_{(\zeta,\infty)}(\cG_N') e^{-\beta \cG_N'} - \trf e^{-\beta \cD} \big| \leq C e^{-c\beta \zeta}  
        \end{split}\]
for positive constants $c, C>0$ that are independent of $N$ and $\zeta$. On the other hand, by Lemma \ref{lm:cNkbnds} and Prop. \ref{prop:GN} $d)$, we know that for some $C_\zeta$ that depends on $\zeta$, but that is independent of $N$, we have   
        \[   \tr e^{-B_\eta} \cN_+(\cN_+-1) e^{B_\eta} \textbf{1}_{[0,\zeta]}(\cG_N') e^{-\beta \cG_N'} \leq  C \, \cN_+(\cN_+-1)  \textbf{1}_{[0,\zeta]}(\cG_N') e^{-\beta \cG_N'} \leq C_\zeta  \]
and, by the analysis preceding \eqref{eq:1pd2}, we can upper bound 
        \[\begin{split}  
        &\frac1{N-1}\tr e^{-B_\eta} \cN_+ (\cN_+-1)e^{B_\eta}\textbf{1}_{(\zeta,\infty)}(\cG_N') e^{-\beta \cG_N'} \leq C\; \tr e^{-B_\eta} \cN_+  \textbf{1}_{(\zeta,\infty)}(\cG_N') e^{-\beta \cG_N'} \leq C e^{-c\beta \zeta}.
        \end{split}\]
Combining these observations, we conclude that 
        \[ \limsup_{N\to\infty} \tr N \rho_N^{(2)}\leq   N^{-1} C_\zeta +  C e^{-c\beta \zeta}  \]
for every $N$ large enough. Since $\zeta >0$ is arbitrary, this concludes the proof of part $a)$. 

Let us switch to the proof of part $b)$ and let us split $  \rho_N^{(2,2)} =  \rho_N^{(2,21)}+ \rho_N^{(2,22)}$, where
        \[\begin{split}
        \rho_N^{(2,21)} &:= 4 P_0\otimes Q_0 \, \rho_N^{(2)}\, P_0\otimes Q_0, \\
        \rho_N^{(2,22)} &:=  P_0\otimes P_0 \, \rho_N^{(2)}\, Q_0\otimes Q_0 + Q_0\otimes Q_0 \, \rho_N^{(2)}\, P_0\otimes P_0. 
        \end{split}\]
Analogously, we split $ \rho^{(2,2)} =  \rho^{(2,21)}+ \rho^{(2,22)}$, where
        \[\begin{split}
        \rho^{(2,21)}&=\frac4N\sum_{p\in\Lambda_+^*} \big(\mu_p^2+\theta_p^2\big) |\ph_0\otimes\ph_p\rangle\langle \ph_0\otimes\ph_p|, \\ 
         \rho^{(2,22)}& = - \frac{4\pi\mathfrak{a}}N\sum_{p\in\Lambda_+^*}   \bigg(\frac{1}{\epsilon_p} + \frac{2}{e^{\beta\epsilon_p}-1}\bigg) \!\Big(|\ph_0\otimes\ph_0\rangle\langle\ph_p\otimes\ph_{-p}| +\text{h.c.}\Big),
        \end{split} \]
and we analyze the asymptotics of the contributions $ \rho_N^{(2,21)}$ and $\rho_N^{(2,22)}$ separately. In fact, in the following we prove the stronger statement that 
        \[ \lim_{N\to\infty} \tr  \big| N \rho_N^{(2,21)} - N\rho^{(2,21)} \big| =\lim_{N\to\infty} \tr  \big| N \rho_N^{(2,22)} - N\rho^{(2,22)} \big| = 0,\]
which, combined with the simple upper bound
        \[\tr N\rho^{(2,21)} = \sum_{p\in\Lambda_+^*} \big(\mu_p^2+\theta_p^2\big)\leq C\sum_{p\in\Lambda_+^*} \frac1{|p|^4} \leq C , \]
implies the statements in part $b)$. 

Hence, let us start with the analysis of $\rho_N^{(2,21)} $: by definition, we have that 
        \[\begin{split}
        \rho_N^{(2,21)} &= 4 P_0\otimes Q_0 \, \rho_N^{(2)}\, P_0\otimes Q_0   = 4\sum_{p\in\Lambda_+^*} \big(\tr |\ph_0\otimes\ph_p\rangle\langle \ph_0\otimes \ph_p| \rho_N^{(2)} \big) |\ph_0\otimes\ph_p\rangle\langle \ph_0\otimes \ph_p|  \geq 0 
        \end{split}\]
and similarly that $ \rho^{(2,21)}\geq 0$, so that the convergence 
        \[ \lim_{N\to\infty}\tr \big| N \rho_N^{(2,21)} -  N \rho^{(2,21)}\big| = 0\]
follows if we prove that for all $p\in\Lambda_+^*$, it holds true that
\begin{align}\label{eq:partial-trace-rho2-21}
\lim_{N\to\infty} \tr |\ph_0\otimes\ph_p\rangle\langle \ph_0\otimes \ph_p| \big(N \rho_N^{(2,21)} -N \rho^{(2,21)} \big) =0
\end{align}
and that 
\begin{align}\label{eq:trace-rho2-21}
\lim_{N\to\infty}  \tr  \big(  N \rho_N^{(2,21)}  -N \rho^{(2,21)}   \big) =0.
\end{align}
Like in Section \ref{sec:1pdm}, the arguments for both statements \eqref{eq:partial-trace-rho2-21} and \eqref{eq:trace-rho2-21} are  similar, so let us focus on the proof of \eqref{eq:trace-rho2-21}. Using that 
        \[\begin{split}
        \frac14 \tr N \rho_N^{(2,21)}  = \frac1{N-1} \frac{\tr   a^*_0 a_0\,\cN_+ \, e^{-\beta H_N}}{\tr e^{-\beta H_N} }  = \frac N{N-1} \frac{\tr    \cN_+ \, e^{-\beta H_N}}{\tr e^{-\beta H_N} } - \frac1{N-1} \frac{\tr    \cN_+^2 \, e^{-\beta H_N}}{\tr e^{-\beta H_N} }, 
        \end{split}\]
the proof of the previous part $a)$ and the results of Section \ref{sec:1pdm} imply immediately that 
        \[ \lim_{N\to\infty}\tr N \rho_N^{(2,21)} = 4\lim_{N\to\infty}\frac{\tr    \cN_+ \, e^{-\beta H_N}}{\tr e^{-\beta H_N} }
         = 4\sum_{p\in\Lambda_+^*} \big(\mu_p^2+\theta_p^2\big) = \tr N \rho^{(2,21)}.  \]
Analogously, one shows that for every $p\in\Lambda_+^*$, we have 
        \[\lim_{N\to\infty} \tr |\ph_0\otimes\ph_p\rangle\langle \ph_0\otimes \ph_p| \big(N \rho_N^{(2,21)} -N \rho^{(2,21)} \big) =0\]
and thus $\lim_{N\to\infty}\tr\big|N \rho_N^{(2,21)} - N \rho^{(2,21)} \big|=0$.

To conclude part $b)$, it now only remains to analyze the asymptotics of $\rho_N^{(2,22)}$. Here, we proceed similarly as for $ \rho_N^{(2,21)}$. First of all, it is straightforward to adapt the arguments from Section \ref{sec:1pdm} to show that for every $ p\in \Lambda_+^* $, we have that
        \be \label{eq:2pdaux1} \begin{split}
         & \lim_{N\to\infty} \tr |\ph_0\otimes\ph_0\rangle\langle \ph_p\otimes \ph_{-p}| N \rho_N^{(2,22)} \\
         & = \lim_{N\to\infty} \frac N{N-1} \frac{\tr e^{-B_\eta} a^*_p\sqrt{1-\cN_{+}/N }a^*_{-p}\sqrt{1-\cN_+/N} e^{B_\eta} \,e^{-\beta \cG'_N}}{\tr  e^{-\beta \cG'_N}}\\
         & = \frac{\trf e^{-K_\nu }a^*_pa^*_{-p}e^{-K_\nu} e^{-\beta\cD } }{\trf e^{-\beta \cD}}= -   4\pi\mathfrak{a}\,\bigg(\frac1{\epsilon_p} + \frac{2 }{e^{\beta\epsilon_p}-1}\bigg).
         \end{split} \ee
Now, using that $ \tr | A| =  \tr \sqrt{A^* A} = \tr \sqrt{A A^*}= \tr |A^*|$ for every trace class operator $ A$ (recall that the operators $A^*A$ and $AA^*$ have the same non-zero eigenvalues), we can estimate
        \[\begin{split}
            &\tr \big| N \rho_N^{(2,22)} - N\rho^{(2,22)}\big|\\
            &\leq 2\,\tr |\ph_0\otimes\ph_0\rangle \langle\ph_0\otimes\ph_0| \sqrt{  \sum_{p\in\Lambda_+^*}\bigg| \tr |\ph_0\otimes\ph_0\rangle\langle \ph_p\otimes \ph_{-p}| N \rho_N^{(2,22)} + 4\pi\mathfrak{a}\,\bigg(\frac1{\epsilon_p} + \frac{2 }{e^{\beta\epsilon_p}-1} \bigg) \bigg|^2  }
                    \end{split} \]
                    which vanishes when $N\to\infty$. Indeed, the convergence to zero follows from \eqref{eq:2pdaux1} and dominated convergence, noticing that $ (\epsilon_p^{-1} + 2(e^{\beta \epsilon_p}-1)^{-1} )_{p\in \Lambda_+^*} \in \ell^{2}(\Lambda_+^*)$ and that
        \[\begin{split}
        &\sum_{p\in\Lambda_+^*} \Big| \tr |\ph_0\otimes\ph_0\rangle\langle \ph_p\otimes \ph_{-p}| N \rho_N^{(2,22)} \Big|^2 \\
        &\leq  \sum_{p\in\Lambda_+^*}   \bigg| \frac{\tr e^{-B_\eta} a^*_p\sqrt{1-\cN_{+}/N }a^*_{-p}\sqrt{1-\cN_+/N} e^{B_\eta} \,e^{-\beta \cG'_N}}{\tr  e^{-\beta \cG'_N}}\bigg|^2\\
        & \leq C\frac{\tr \cN_+ e^{-\beta \cG_N'}}{\tr   e^{-\beta \cG_N'}} \sum_{p\in\Lambda_+^*} \frac{ \tr e^{-B_\eta}a^*_p a_pe^{B_\eta} e^{-\beta \cG_N'}}{\tr   e^{-\beta \cG_N'} }  \leq C \bigg(\frac{\tr \cN_+ e^{-\beta \cG_N'}}{\tr   e^{-\beta \cG_N'}} \bigg)^2 \leq C    
        \end{split}\]
for some $C>0$ that is independent of $N$. Here, we used in the second step Cauchy-Schwarz and in the last step the results of Section \ref{sec:1pdm}. Collecting the above observations, this proves part $b)$.  

Finally, let us explain part $c)$. By Cauchy-Schwarz and the previous steps $a)$ and $b)$, we get
        \[ \begin{split} 
            \tr \big | N\rho_N^{(2,3)}\big| &\leq C \,\tr \big | P_0\otimes Q_0 \,N\rho_N^{(2)} Q_0\otimes Q_0  \big|  \leq C \sqrt{ \tr N \rho_N^{(2,21)}   }  \sqrt{ \tr N \rho_N^{(2,4)}   }\\
            &\leq C \Big(1+  \sqrt{ \tr \big(N \rho_N^{(2,21)} - N\rho^{(2,21)}\big)   }\Big)\sqrt{ \tr N \rho_N^{(2,4)}   } \to 0 
        \end{split}\]
in the limit $N\to\infty$. Similarly, combining parts $a)$ and $b)$ with $ \tr \rho_N^{(2)}=1$, $ \tr \rho_N^{(2,22)} = 0$ and
        \[\begin{split}
             &P_0\otimes P_0 N \rho_N^{(2)} P_0\otimes P_0  = \Big( \tr |\ph_0\otimes\ph_0\rangle\langle\ph_0\otimes\ph_0| N\rho_N^{(2)}\Big)|\ph_0\otimes\ph_0\rangle\langle\ph_0\otimes\ph_0|\\
             & = \Big( N - 2\, \tr P_0\otimes Q_0 N\rho_N^{(2)} - \tr Q_0\otimes Q_0\rho_N^{(2)}\Big)|\ph_0\otimes\ph_0\rangle\langle\ph_0\otimes\ph_0| \\
             & = \Big( N -   \tr  N\rho_N^{(2,21)}  - \frac12 \tr  N\rho_N^{(2,22)} -   \tr  N\rho_N^{(2,3)} - \tr  N\rho_N^{(2,4)} \Big) |\ph_0\otimes\ph_0\rangle\langle\ph_0\otimes\ph_0|, 
        \end{split}\]
we conclude the proof of part $c)$ in Lemma \ref{lem:last}. This concludes the estimate \eqref{eq:2pd} in Theorem \ref{thm:main}. 
\end{proof}


\vspace{0.5cm}
\noindent \textbf{Acknowledgements.} This work was partially funded by the Deutsche Forschungsgemeinschaft (DFG, German Research Foundation) via Germany's Excellence Strategy GZ 2047/1, Project Nr. 390685813 (C. Brennecke) and the Beethoven Classics 3 framework, Project Nr. 426365943 (P. T. Nam). J. Lee was   supported by the European Research Council via ERC CoG RAMBAS, Project Nr. 101044249.

\end{document}